\title{Privacy-Aware Load Ensemble Control: A Linearly-Solvable MDP Approach}
\author{Ali Hassan, Deepjyoti Deka and Yury Dvorkin}
\date{} 
\theoremstyle{plain}
\newtheorem{theorem}{Theorem}
\newtheorem{proposition}{Proposition}
\newtheorem{corollary}{Corollary}
\theoremstyle{definition}
\newcommand{\subparagraph}{}
\titlespacing{\section}{0pt}{3pt}{1pt}
\titlespacing{\subsection}{0pt}{3pt}{1pt}
\theoremstyle{definition}
\newtheorem{definition}{Definition}[section]
\theoremstyle{remark}
\newtheorem*{remark}{Remark}
\begin{document}			
\clearpage
\thispagestyle{empty}
\maketitle

\begin{abstract}
Demand response (DR) programs engage distributed demand-side resources, e.g., controllable residential and commercial loads, in providing ancillary services for electric power systems. Ensembles of these resources can help  reducing system load peaks and meeting operational limits by adjusting their electric power consumption. To equip utilities or load aggregators with adequate decision-support tools for ensemble dispatch, we develop a Markov Decision Process (MDP) approach to optimally control load ensembles in a privacy-preserving manner. To this end, the concept of differential privacy is internalized into the MDP routine to protect transition probabilities  and, thus,  privacy of DR participants. The proposed approach also  provides a trade-off between solution optimality and privacy guarantees, and is analyzed using real-world data from DR events in the New York University microgrid in  New York, NY.
\end{abstract}


\textcolor{black}{
\section*{Nomenclature}
\subsection{Sets and Indices}
\addcontentsline{toc}{subsection}{Sets and Indices}
\begin{IEEEdescription}[\IEEEusemathlabelsep\IEEEsetlabelwidth{ve long}]
\item[$\mathcal{A}$]{Set of discretized  space states}
\item[$\alpha,\beta,\nu$]{States in  set $\mathcal{A}$}
\item[$\mathcal{T}$]{Set of operating time intervals, indexed by $t$}
\item[$\Delta_n$]{Output space of the Dirichlet mechanism}
\item[$\Omega_1,\Omega_2$]{Two disjoint sets that constitute  output space $\Delta_n$}
\end{IEEEdescription}
\subsection{Parameters and Variables}
\addcontentsline{toc}{subsection}{Parameters and Variables}
\begin{IEEEdescription}[\IEEEusemathlabelsep\IEEEsetlabelwidth{ve long}]
\item[$\overline{\mathcal{P}}$]{Non-private default transition probability matrix}
\item[$\overline{\mathcal{P}}^{\alpha\beta}$]{Elements of  non-private default transition probability matrix $\overline{\mathcal{P}}$}
\item[$\overline{\mathcal{\zeta}}^{\beta}$]{Row vector of  non-private default transition probability matrix $\overline{\mathcal{P}}$}
\item[$\overline{\mathcal{\eta}}^{\beta}$]{Vector adjacent to  row vector $\overline{\mathcal{\zeta}}^{\beta}$}
\item[$\Tilde{\overline{\mathcal{P}}}$]{Private default transition probability matrix}
\item[$\Tilde{\overline{\mathcal{P}}}^{\alpha\beta}$]{Elements of  private default transition probability matrix $\Tilde{\overline{\mathcal{P}}}$}
\item[$\gamma$]{Penalty parameter for discomfort cost}
\item[$h$]{Maximum scalar difference between two vectors}
\item[$k$]{Scaling parameter of the Dirichlet mechanism}
\item[$\psi$]{Parameter defining sets $\Omega_1,\Omega_2$ in output space $\Delta_n$}
\item[$\omega,\overline{\omega}$]{Auxiliary parameters of the Dirichlet mechanism}
\item[$\overline{\chi}^{\beta}$]{Output vector of the Dirichlet mechanism}
\item[$\mathcal{P}_t$]{Non-private controlled transition probability matrix}
\item[$\mathcal{P}_t^{\alpha\beta}$]{Elements of  non-private controlled transition probability matrix $\mathcal{P}_t$}
\item[$\Tilde{\mathcal{P}}_t$]{Private controlled transition probability matrix}
\item[$\Tilde{\mathcal{P}}_t^{\alpha\beta}$]{Elements of  private controlled transition probability matrix $\Tilde{\mathcal{P}}_t$}
\item[$\rho_t$]{Non-private vector containing probabilities of the load ensemble being in a certain state such as $\alpha,\beta\in\mathcal{A}$ at time $t$}
\item[$\rho_t^{\alpha\beta}$]{Elements of  non-private vector $\rho_t$}
\item[$\Tilde{\rho}_t$]{Private vector containing probabilities of the load ensemble being in a certain state such as $\alpha,\beta\in\mathcal{A}$ at time $t$}
\item[$\Tilde{\rho}_t^{\alpha\beta}$]{Elements of  private vector $\Tilde{\rho}_t$}
\item[$U_t^{\beta}$]{Utility of the aggregator in state $\beta$ at time $t$}
\item[$\Delta C$]{Cost of privacy}
\end{IEEEdescription}
\subsection{Functions and Operators}
\addcontentsline{toc}{subsection}{Functions and Operators}
\begin{IEEEdescription}[\IEEEusemathlabelsep\IEEEsetlabelwidth{ve long}]
\item[$\varphi_t^{\beta}$]{Non-private value function of state $\beta$ at time $t$}
\item[$z_t^{\beta}$]{Non-private desirability function of state $\beta$ at time $t$}
\item[$\Tilde{\varphi}_t^{\beta}$]{Private value function of state $\beta$ at time $t$}
\item[$\Tilde{z}_t^{\beta}$]{Private desirability function of state $\beta$ at time $t$}
\item[$\text{B}(.)$]{Multivariate beta function}
\item[$\Gamma(.)$]{Gamma function}
\item[$\psi(.)$]{Digamma function}
\item[$KL\text{[.]}$]{Kullback–Leibler divergence}
\item[$\mathbb{P}\text{[.]}$]{Probability of a random variable}
\item[$\mathbb{E}\text{[.]}$]{Expectation of a random variable}
\item[$\text{Var}(.)$]{Variance of a random variable}
\end{IEEEdescription}
}

\section{Introduction}

Electric power utilities deploy Demand Response (DR) programs to reduce dependency on conventional generation assets for meeting peak electricity demand or avoiding network congestion. DR programs provide an alternative to using these resources by encouraging electricity customers to change their power consumption in exchange for an incentive \cite{DR_overview}. This incentive can come in a form of direct monetary compensations (e.g. payment or electricity bill rebate/credit) or else (e.g. voucher, gift card, etc) \cite{DR_incentive_giftcards,DR_incentive_giftcards2}. In total, the U.S. Energy Information Administration (EIA) reports that DR programs yielded energy savings of 1,462,735 MWh in 2019, an increase of 9.4\% from 2016 \cite{EIA}.

To model residential DR programs, which must deal with a large number of small-scale appliances with various nameplate parameters and characteristics, the Markov Decision Process (MDP) framework has been adopted and then field-validated as an effective and scalable decision-making framework for controlling large ensembles of such appliances and buildings \cite{Callaway_TCL,MDP_Meyn,MDP_Meyn2,MDP_Meyn3,Chertkov_MDP_chap,MDP_Deka,MDP_Hassan,MDP_Pop,MDP_Emiliano,MDP_Turitsyn,MDP_Hassan_Uncertainty,MDP_Learning_Ali}. The advantage of this framework is in connecting (i) modeling convenience of the MDP, which makes it easier to obtain high-fidelity and data-driven representations of load and building ensembles, and (ii) solution simplicity of dynamic programming. However, while the MDP is useful to operators of DR programs \cite{Callaway_TCL,MDP_Meyn,MDP_Meyn2,MDP_Meyn3,Chertkov_MDP_chap,MDP_Deka,MDP_Hassan,MDP_Pop,MDP_Emiliano,MDP_Turitsyn,MDP_Hassan_Uncertainty,MDP_Learning_Ali}, it has relatively large data requirements to properly construct a Markov Process (MP) underling the MDP. This, in turn, raises privacy and security concerns that the data used for constructing a MP of the load ensemble can be breached and sensitive features of individual DR participants can be revealed \cite{Privacy_issues,Privacy_NIST}. 

Privacy issues are not unique to DR programs. For example, in 2007, Netflix released a data set of their user ratings as part of a competition to see if anyone can outperform their collaborative filtering algorithm. Although all personal identifying information was removed for privacy protection, two researchers from the University of Austin were still able to breach privacy and recovered 99\% of the removed personal information using reference data from IMDb \cite{Netflix}. This exposed the ability of compromising user privacy by reconstructing useful information from other, potentially unprotected, data sets. Similarly, large volumes of data generated and gathered for DR programs can be used to compromise the privacy of  DR participants and reveal their sensitive information. For example, detailed electricity consumption data can reveal lifestyle profiles and choices, including but not limited to a number of people living in a household, absence or presence of household members, working hours, meal times, types of appliances and even religious affiliations in some cases \cite{Privacy_leaks}. National Institute of Standards and Technology (NIST) considers four different dimensions or aspects of privacy as -- personal information, personal privacy, behavioral privacy, and personal communication -- that can be infringed upon by using the electricity consumption and DR program data \cite{Privacy_NIST}.

\textcolor{black}{Accordingly, motivated by the data breaches as in \cite{Privacy_issues,Privacy_NIST,Privacy_leaks}, DR programs should be immunized against risks of exposing their customers. Previously, privacy concerns in DR programs have been studied in different contexts. 
The notable techniques employed to protect the privacy involve data aggregation \cite{Data_Agg_1,Data_Agg_2,Data_Agg_3,Data_Agg_4}, data anonymization \cite{Data_Ann_1,Data_Ann_2,Data_Ann_3} and data perturbation \cite{Data_pert_1,Data_pert_2,Data_pert_3}. Data aggregation and anonymization schemes can provide a false sense of privacy. For example, even though they can ensure the security of data during communication, privacy could still be compromised by specific filters such as Collaborative Filtering \cite{Privacy_filter} or reconstruction theorems or attacks \cite{reconstruction}, and the aggregated data can be manipulated to release sensitive information about individuals (see the Netflix data breach example mentioned above \cite{Netflix}. In \cite{Data_Jawurek}, Jawurek et al. show that even after pseudonymization, which is a  data management procedure promoted by the European Union's General Data Protection Regulation (GDPR), data items can still be traced to their respective individuals sources. Data perturbation means adding random noise to the data to protect the original data, but this procedure per se does not always guarantee privacy as original data can be recovered from the perturbed data \cite{Data_pert_attack1} (e.g. by means of  de-noising \cite{Data_pert_attack2}). This motivates us to explore such techniques that can guarantee privacy, while ensuring the usability of the aggregated data.} Differential privacy (DP) is a mathematically rigorous framework that allows for sharing aggregated data (e.g. parameters of a given load ensemble), while preserving features of individuals (e.g. loads in the ensemble). In other words, DP makes it possible to quantify to what extent privacy of an individual in the aggregated data is preserved, while sharing useful aggregated information about the data set \cite{dwork2008differential}. This property, as explained in \cite{Dwork_book}, ensures that DP is capable of \enquote{learning nothing about an individual, while learning useful information about a population}. This property makes DP suitable for internalizing privacy restrictions into the MDP framework applied to DR programs, which deals with the aggregated behavior of DR participants. Additionally, DP performance hardly changes when a single individual joins or leaves a given ensemble, i.e. its output will be roughly the same, regardless if an individual is part of the data set or not. Notably, despite being a fairly recent invention, DP has already been widely used in  privacy-critical applications, e.g. by US Census Bureau \cite{DP_Census}, Apple \cite{DP_Apple}, LinkedIn \cite{DP_LinkedIn}, Facebook \cite{DP_Facebook}, Microsoft \cite{DP_Microsoft} and Google \cite{DP_Google}.

A DP analysis of a method or an algorithm is usually called a mechanism. Any DP mechanism is randomized to ensure that anyone observing its output will essentially make the same inference about any individual’s data whether or not that individual’s data is included in its input. As such, a given mechanism is usually randomized by adding some noise to the input or output of the algorithm depending upon the given framework or application \cite{chaudhuri2011differentially}. Some common DP mechanisms include Laplace and Gaussian mechanisms, where added noises are drawn from Laplace and Gaussian distributions. While these mechanisms are widely employed in practice, their application in the MDP is challenging as adding a random noise may not comply with the integrality requirement on the MDP, i.e. the cumulative probability of moving from any given state to all future possible states is equal to one. Recently, this challenge has been overcome by means of the Dirichlet mechanism \cite{Dirichlet_mechanism}, which was subsequently introduced into the MDP \cite{Dirichlet_mechanism_MDP}. Since the Dirichlet distribution is simplex-constrained, it allows for maintaining the integrality requirement for the MDP. However, the result in \cite{Dirichlet_mechanism_MDP} deals with conventional MDPs, i.e. it searches over all actions for each future state, which is computationally demanding due to the exponential growth of future states. Furthermore, MDPs often do not allow for analytical solutions and require using numerical methods. To overcome this challenge, a conventional MDP can be reduced to a linearly-solvable MDP (LS-MDP), under suitable action costs. \textcolor{black}{Unlike in a conventional MDP, the optimal policy derived from the LS-MDP is not a mapping of states to action variables, but rather a mapping of a current state into a next-state distribution. Additionally, LS-MDPs are reduced to linear eigenvalue problems and  can be solved using analytical optimal policy, thus reducing computational requirements, while providing accurate approximations to conventional MDPs \cite{LSMDP_todorov}.} 

\textcolor{black}{The current literature on DP implementation for DR programs adds noise to the input or aggregated data \cite{Data_diff_privacy1,Data_diff_privacy2,Data_diff_privacy3}. Such methods don't consider the optimization problem, limiting their application since their solutions are not certified feasible. In contrast, we internalize the noise into our optimization problem and utilize stochastic programming \cite{Valadmir2} for privacy-aware DR solutions.}
This paper introduces DP into a LS-MDP framework, where DP is employed to privatize default transition probabilities, which are obtained from the MP derived from electricity consumption data, by means of adding a noise from a specified Dirichlet distribution to protect the privacy of DR participants.
\textcolor{black}{The main contributions are as follows:
\begin{enumerate}
\item We leverage stochastic programming to derive privatized optimal policies. These optimal policies are the generalized expressions of the stochastic policies derived in \cite{MDP_Hassan_Uncertainty} to internalize uncertainty, using first and second statistical moments, but extended to privatized versions with privacy guarantees on default transition probabilities. 
\item While the use of stochastic programming enables the aggregators to protect the privacy of DR participants in their optimization problems, they may need to send their data to external entities (e.g. web/cloud computing or report data to authorities for audits \cite{LL84_LL87}). This motivates us for exploring the average value method for DP, our second contribution. This method solves the LS-MDP problem for a set of randomly generated samples of default transition probabilities, and provides randomized optimal policies for these samples. This also allows the aggregator to outsource the optimization step to a third entity without compromising the privacy of DR participants. Unlike in the stochastic methods described above, the average value method does not require specifying first and second statistical moments \textcolor{black}{as exactly known for the randomized policies.} The average of all the optimal randomized policies is computed to compare the outcome of the average value method to the stochastic approach.
\item The cost of privacy for the DP LS-MDP is also derived for both the stochastic and average value methods, and the loss of optimality, also known as the cost of privacy, that occurs as the result of providing privacy to DR participants, is analyzed.
\end{enumerate}}
\textcolor{black}{The usefulness of the proposed framework is demonstrated on the real-life data based on New York University (NYU) campus building, which participates in Consolidated Edison operated DR events.}

\textcolor{black}{The rest of this paper is organized as follows. Section II presents the fundamentals of LS-MDP and derives a non-private control policy. Section III introduces the basic concepts of differential privacy, describes the Dirchlet mechanism and related properties. Sections IV and V describe the proposed stochastic and average values approaches respectively and derive privacy-aware control policies and their cost of privacy expressions. Section VI presents the case study to show the effectiveness of the proposed models. Section VII concludes the paper.}

\section{A LS-MDP Formulation} \label{sec:ls-mdp_formulation}
The load ensemble can be represented by a discrete-time, discrete-space MDP similar to our prior work in \cite{Chertkov_MDP_chap,MDP_Deka,MDP_Hassan,MDP_Pop,MDP_Hassan_Uncertainty,MDP_Learning_Ali}. We consider a LS-MDP with state space $\mathcal{A}$, time space $\mathcal{T}$, default transition probability matrix $\overline{\mathcal{P}}\in \mathbb{R}^{n\times n}$, $n=|\mathcal{A}|$, controlled transition probability matrix $\mathcal{P}_{t}\in \mathbb{R}^{n\times n}$, $n=|\mathcal{A}|$ and utility of the aggregator $U_{t}^{\beta}$. State space $\mathcal{A}=\{\alpha,\beta, \cdots\}$ is obtained by discretizing controllable parameters of the loads in the ensemble (e.g., power consumption or temperature levels) and time space $\mathcal{T}=\{t,t+1, \cdots \}$ represents discrete time intervals (e.g., hours or minutes). Fig.~\ref{fig:MP_representation} illustrates a MP with eight states with possible transitions from state 1 to other states. Default transition probability matrix $\overline{\mathcal{P}}\in \mathbb{R}^{n\times n}$ represents the evolution of the ensemble in the state space without controlled decisions of the aggregator, i.e. it represents a MP that corresponds to the endogenous dynamics of the load ensemble. On the other hand, controlled transition probability matrix $\mathcal{P}_{t}\in \mathbb{R}^{n\times n}$ represents the evolution of the ensemble under the control actions of the aggregator. Notably, matrix $\mathcal{P}_{t}\in \mathbb{R}^{n\times n}$ is time-variable to enable dynamic control of the ensemble. More specifically, $\overline{\mathcal{P}}^{\alpha \beta}$ and ${\mathcal{P}}^{\alpha \beta}_t$ are elements of matrix $\overline{\mathcal{P}}$ and $\mathcal{P}_t$ associated with a transition from state $\beta$ to state $\alpha\neq\beta\in \mathcal A$. 

While matrix elements $\overline{\mathcal{P}}^{\alpha \beta}$ can be retrieved from processing historical observations (measurements) of the load ensemble \cite{Callaway_TCL}, matrix elements $\mathcal{P}_{t}^{\alpha\beta}$ are the outcomes of the decision-making process of the aggregator, which aims to maximize its utility function given by $U_{t}^{\beta}$. This decision-making process can be cast as: 
\begin{figure}[!t]
\centering 
\includegraphics[width=0.95\columnwidth]{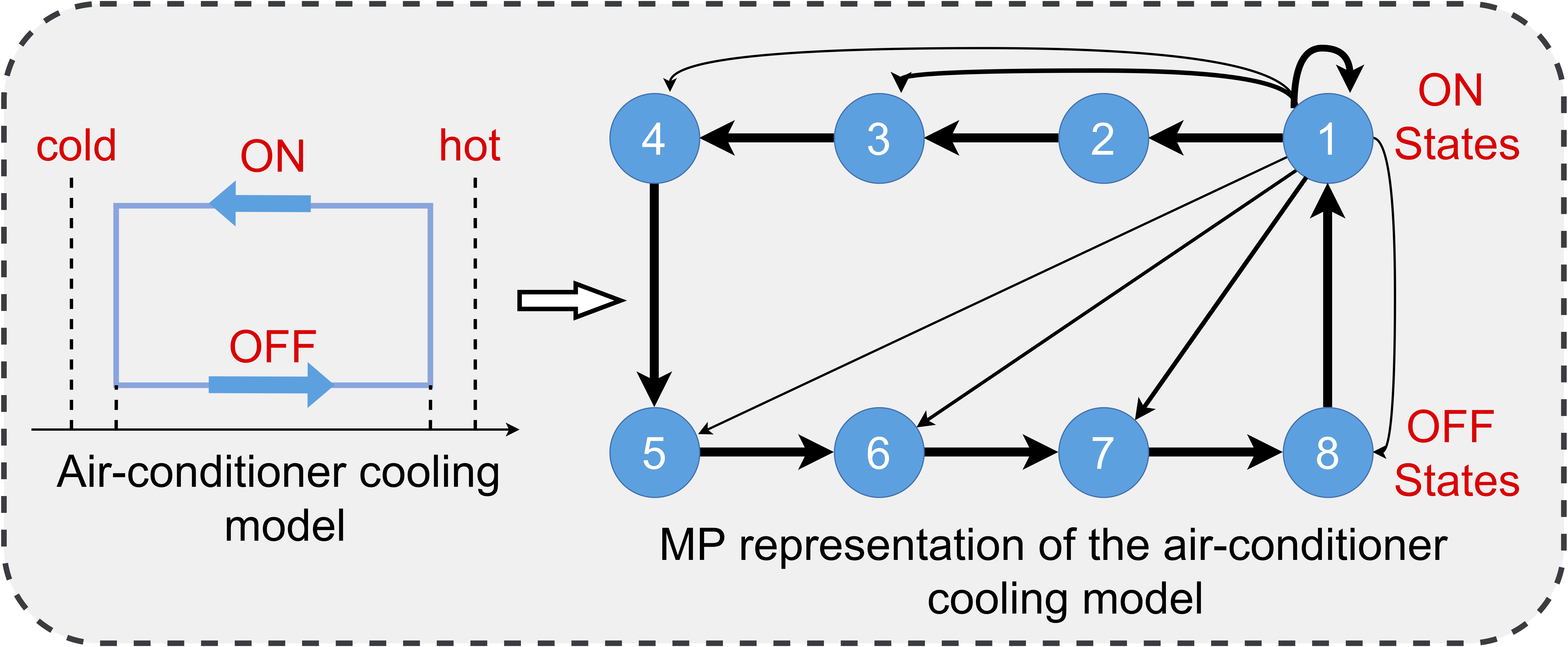}
\caption{A MP representation of a load ensemble with 8 states displaying all possible transitions from state 1.}
\label{fig:MP_representation}
\end{figure}
\begin{subequations}
\begin{align}
&\underset{\substack{\rho_t,\mathcal{P}_t}}{\text{min}} O^{A}:=  \mathbb{E}_{\rho}\!\!
\sum_{t \in \mathcal{T}-1} \sum_{\alpha \in \mathcal{A}} \Big(-U_{t+1}^{\alpha} + \sum_{\beta \in \mathcal{A}} \gamma \log\! \frac{\mathcal{P}_{t}^{\alpha\beta}}{\overline{\mathcal{P}}^{\alpha\beta}}\Big) \label{MDP:obj}
\\
&\text{s.t.} \ \ \ \rho_{t+1}^{\alpha} = \sum_{\beta \in \mathcal{A}} \mathcal{P}_{t}^{\alpha\beta} \rho_{t}^{\beta}, \quad \forall \alpha \in \mathcal{A}, t \in \mathcal{T} 
\label{MDP_evol} \\
&\ \ \ \ \ \sum_{\alpha \in \mathcal{A}} \mathcal{P}_{t}^{\alpha \beta} = 1,\quad \forall \beta \in \mathcal{A}, t \in \mathcal{T}-1, \label{mdp_integrality} 
\end{align}
\label{base_mdp}
\end{subequations}
where $\rho_t \in \mathbb{R}^{n}$, $n=|\mathcal{A}|$, is a vector with entries $\rho_{t+1}^{\alpha} \geq 0$ and $\rho_{t}^{\beta} \geq 0$ representing the probabilities that the load ensemble is operated in states $\alpha,\beta\in\mathcal{A}$ at times $t+1$ and $t$, respectively, and are related via controlled transition probabilities $\mathcal{P}_{t}^{\alpha\beta}$. Eq.~\eqref{MDP:obj}, where operator $\mathbb{E}_{\rho}$ denotes the expectation over $\rho$, maximizes the utility of the DR aggregator and minimizes the discomfort cost of DR participants. The discomfort cost is modeled as the  Kullback-Leibler (KL) divergence \cite{KL_book}, i.e. the logarithmic difference between the default transition probabilities and  controlled transition probabilities weighed by cost penalty $\gamma$, which can be interpreted as a lost utility due to the need to deviate from the default behavior of electric loads in the ensemble. For example, this interpretation can be viewed as a change in temperature settings of air-conditioning units from user-defined comfort levels to the levels optimized by the DR aggregator. Furthermore, if $\mathcal{\overline{P}}^{\alpha \beta}=0$, i.e. a transition from state $\beta$ to $\alpha$ has not been observed in the past, the model in \eqref{base_mdp} restricts $\mathcal{P}_{t}^{\alpha\beta}=0$ and excludes such transitions from the optimization \footnote{\textcolor{black}{Including historically unobserved but still probable transitions will require either re-constructing the MP to reflect these default transition probabilities or learning them via reinforcement learning techniques, see  \cite{MDP_Learning_Ali}.}}. Eq.~\eqref{MDP_evol} describes the temporal evolution of the load ensemble (e.g., changes in temperature settings of air-conditioning units) from time $t$ to $t + 1$ over time horizon $\mathcal{T}$. Eq.~\eqref{mdp_integrality} imposes the integrality constraint on the transition decisions optimized by the DR aggregator such that their total probability is equal to one. As shown in \cite{MDP_Hassan_Uncertainty}, Eq.~\eqref{base_mdp} can be optimized as follows:

\begin{theorem} \label{theorem_0} \normalfont
Let \eqref{base_mdp} model a load ensemble as a LS-MDP. Then, the optimal control policy $\forall \beta,\alpha\in\mathcal{A}, \forall t\in\mathcal{T},$ is computed as:
\begin{align}
\begin{split}
&\mathcal{P}_{t}^{\alpha \beta} = \frac{\overline{\mathcal{P}}^{\alpha\beta}z^{\alpha}_{t+1}}{\sum\limits_{\alpha\in\mathcal{A}}\overline{\mathcal{P}}^{\alpha\beta}z^{\alpha}_{t+1}}, \label{optimal_policy}
\end{split}
\end{align}
where $\forall \beta\in\mathcal{A}, \forall t\in\mathcal{T},$ and 
\begin{align}
&z^{\beta}_{t} = \text{exp}\Big(\frac{U_{t}^{\beta}}{\gamma}\Big) \sum_{\alpha\in\mathcal{A}}\overline{\mathcal{P}}^{\alpha\beta}z^{\alpha}_{t+1}, \label{bellmen_blackuced_det}
\end{align}
\end{theorem}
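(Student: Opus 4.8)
The plan is to solve \eqref{base_mdp} by backward dynamic programming on the time index, using the fact that the only inter-stage coupling is the forward recursion \eqref{MDP_evol}. Reading $O^{A}$ as the expected total cost of the Markov chain generated by the initial distribution and the controls $\{\mathcal{P}_t\}$, linearity of expectation together with the Markov property lets me write the optimal cost-to-go $v_t^{\beta}$ from state $\beta$ at time $t$ as the Bellman recursion
\begin{align}
v_t^{\beta} \;=\; \min_{\mathcal{P}_t^{\cdot\beta}\in\Delta}\ \sum_{\alpha\in\mathcal{A}} \mathcal{P}_t^{\alpha\beta}\Big(-U_{t+1}^{\alpha} + \gamma\log\tfrac{\mathcal{P}_t^{\alpha\beta}}{\overline{\mathcal{P}}^{\alpha\beta}} + v_{t+1}^{\alpha}\Big), \label{sketch:bellman}
\end{align}
where $\Delta$ is the probability simplex over $\mathcal{A}$ restricted to the support of $\overline{\mathcal{P}}^{\cdot\beta}$ (so the $\log$ is well defined, consistent with the convention $\overline{\mathcal{P}}^{\alpha\beta}=0\Rightarrow\mathcal{P}_t^{\alpha\beta}=0$), and the terminal value is $v_{|\mathcal{T}|}^{\beta}=0$. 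Since \eqref{sketch:bellman} decouples across $\beta$ and each inner problem is strictly convex in $\mathcal{P}_t^{\cdot\beta}$ --- the relative-entropy term is strictly convex and everything else is linear --- it is enough to characterize the unique minimizer of each stage problem and then verify, by induction on $t$, that stitching these minimizers together attains the minimum in \eqref{base_mdp}.

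For fixed $t$ and $\beta$ I would attach a single multiplier $\lambda$ to the normalization $\sum_{\alpha}\mathcal{P}_t^{\alpha\beta}=1$; the nonnegativity constraints stay inactive because the derivative of $x\log x$ diverges at $0$. Stationarity of the Lagrangian in $\mathcal{P}_t^{\alpha\beta}$ gives $\gamma\log\tfrac{\mathcal{P}_t^{\alpha\beta}}{\overline{\mathcal{P}}^{\alpha\beta}} = U_{t+1}^{\alpha}-v_{t+1}^{\alpha}-\gamma-\lambda$, i.e. $\mathcal{P}_t^{\alpha\beta}\propto \overline{\mathcal{P}}^{\alpha\beta}\exp\!\big((U_{t+1}^{\alpha}-v_{t+1}^{\alpha})/\gamma\big)$. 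Introducing $z_{t+1}^{\alpha}:=\exp\!\big((U_{t+1}^{\alpha}-v_{t+1}^{\alpha})/\gamma\big)$ and eliminating $\lambda$ via the normalization recovers precisely \eqref{optimal_policy}.

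To close the recursion I would substitute this minimizer back into \eqref{sketch:bellman}: the bracketed quantity collapses to $-\gamma\log\!\big(\sum_{\alpha}\overline{\mathcal{P}}^{\alpha\beta}z_{t+1}^{\alpha}\big)$, which no longer depends on $\alpha$, so $v_t^{\beta}=-\gamma\log\!\big(\sum_{\alpha}\overline{\mathcal{P}}^{\alpha\beta}z_{t+1}^{\alpha}\big)$. Exponentiating the definition $z_t^{\beta}=\exp\!\big((U_t^{\beta}-v_t^{\beta})/\gamma\big)$ turns this nonlinear backward map into the linear recursion \eqref{bellmen_blackuced_det}, with base case $z_{|\mathcal{T}|}^{\beta}=\exp(U_{|\mathcal{T}|}^{\beta}/\gamma)$. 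A downward induction on $t$ then certifies global optimality, since at each stage the proposed policy is the unique minimizer of a convex program and the dynamic-programming principle guarantees that the concatenation solves \eqref{base_mdp}; this is the Todorov-type reduction of \cite{LSMDP_todorov} specialized to the cost in \eqref{base_mdp}, and it generalizes the derivation in \cite{MDP_Hassan_Uncertainty}.

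The step I expect to be the real obstacle is the first one: rigorously justifying the stagewise dynamic-programming decomposition of $O^{A}$ in the presence of the coupling constraint \eqref{MDP_evol}, and getting the support and zero-probability bookkeeping right so that \eqref{sketch:bellman} and its KKT system are well posed and the $\log$ never sees a zero argument. The remaining calculus --- the Lagrangian stationarity, the normalization, and the collapse of the Bellman bracket --- is routine once that scaffolding is in place.
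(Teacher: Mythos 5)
Your proposal is correct and follows essentially the same route as the paper: a backward Bellman recursion whose per-stage KL-regularized minimization yields the exponentially tilted policy \eqref{optimal_policy}, followed by substitution and the desirability change of variables to obtain the linear recursion \eqref{bellmen_blackuced_det}. The only cosmetic differences are that you solve the stage problem by Lagrangian stationarity where the paper rewrites it as a KL divergence and sets it to zero, and you fold the utility into $z_t^{\beta}=\exp((U_t^{\beta}-v_t^{\beta})/\gamma)$ while the paper uses $z_t^{\beta}=\exp(-\varphi_t^{\beta}/\gamma)$ with the utility placed at the current state — both conventions give identical results.
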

\begin{proof} 
See proof in Appendix \ref{appendix_proof_theorems}.
\end{proof}
\textcolor{black}{Here, $z_{t}^{\beta}$ is a desirability function representing value function $\varphi^{\beta}_{t}$ of state $\beta$ at time $t$, i.e., $z_{t}^{\beta} = \text{exp}(\frac{-\varphi_{t}^{\beta}}{\gamma})$. Similarly, $z_{t+1}^{\alpha}$ is a desirability function for state $\alpha$ at time $t+1$.}

\section{Differentially Private LS-MDP} \label{sec:dp_ls_mdp}
The LS-MDP developed in Section \ref{sec:ls-mdp_formulation} to model an ensemble of controllable loads is prone to privacy breaches and can reveal sensitive data of DR participants. In this section, we incorporate DP into the LS-MDP formulation in \eqref{base_mdp} to protect privacy of DR participants and derive differentially private optimal policy as summarized in Fig.~\ref{fig:DP_LS-MDP}. We explain the use of DP for LS-MDPs by starting with the following definitions:

\begin{definition}[Adjacent vectors] \label{def:adjacent_vectors}
Let $\overline{\mathcal{\zeta}}^{\beta}\in \mathbb{R}^{n}, n=|\mathcal{A}|,$ be a row vector of matrix $\overline{\mathcal{P}}$, i.e. $\overline{\mathcal{\zeta}}^{\beta} =\{\overline {\mathcal P}^{\alpha\beta}, \forall \alpha \in \mathcal{A} \}$, and let vector $\overline{\mathcal{\eta}}^{\beta}\in \mathbb{R}^{n}, n=|\mathcal{A}|,$ have the same elements as vector $\overline{\mathcal{\zeta}}^{\beta}$ except for two randomly chosen elements. Then vectors $\overline{\mathcal{\zeta}}^{\beta}$ and $\overline{\mathcal{\eta}}^{\beta}$ are adjacent if:
\begin{align}
& ||\overline{\zeta}^{{\beta}} - \overline{\mathcal{\eta}}^{\beta}||_{1} \leq h,
\end{align}
\textcolor{black}{where $||\cdot||_{1}$ denotes the first norm of a vector and $0<h<1$ represents the maximum scalar difference between the two vectors. Using $h=0$ will result in two identical vectors i.e., $\overline{\zeta}^{{\beta}} = \overline{\mathcal{\eta}}^{\beta}$ , and $h=1$ will violate the integrality constraint on the row vector $\overline{\mathcal{\eta}}^{\beta}$.} 
We denote  two adjacent vectors  using symbol  $\simeq$, i.e. $\overline{\mathcal{\zeta}}^{\beta}\simeq\overline{\mathcal{\eta}}^{\beta}$. 
\end{definition} 
Consistently with \cite{Dirichlet_mechanism,Dirichlet_mechanism_MDP}, Definition \ref{def:adjacent_vectors} ensures that vectors $\overline{\mathcal{\zeta}}^{\beta}$ and $\overline{\mathcal{\eta}}^{\beta}$ are different by no more than value $h$, which is user defined. This definition departs from the traditional notion of differential privacy \cite{Dwork_book}, where two data sets are different in only one entry. This modification is needed to ensure that if one element in vector $\overline{\mathcal{\eta}}^{\beta}$ is modified to ensure DP, the integrality constraint  in \eqref{mdp_integrality} holds, i.e. the probability of the ensemble being in a given state is equal to one. In other words, we modify two elements in vector $\overline{\mathcal{\eta}}^{\beta}$ because in the LS-MDP context, the vector elements are interpreted as probabilities and, thus, must add up to one. 

\begin{definition}[Differential Privacy] \label{definition_dp}
Random mechanism $\Tilde{\mathcal{M}}$ is $\epsilon$-differentially private, if for all adjacent input vectors $\overline{\mathcal{\zeta}}^{\beta}\simeq\overline{\mathcal{\eta}}^{\beta}$ and  output vectors ${\overline{\chi}}^{\beta}$, the following holds:
\begin{align}
& \mathbb{P}[\Tilde{\mathcal{M}}(\overline{\zeta}^{\beta})={\overline{\chi}}^{\beta}] \leq e^{\epsilon} \mathbb{P}[\Tilde{\mathcal{M}}(\overline{\eta}^{{\beta}}\!)={\overline{\chi}}^{\beta}],
\end{align}
where $\mathbb{P}$ denotes the probability of an event and $\epsilon \geq 0$ represents the privacy loss quantified by DP, where low values of $\epsilon$ provide more privacy at the expense of accuracy.
\end{definition}
\textcolor{black}{As applied to the default transition probabilities of the LS-MDP optimization in \eqref{base_mdp}, Definition~\ref{definition_dp} treats this process as mechanism $\Tilde{\mathcal{M}}$ and its input $\overline{P}^{\alpha \beta}$ and output $\Tilde{\overline{\mathcal{P}}}^{\alpha\beta}$ as components of vectors $\overline{\mathcal{\zeta}}^{\beta}$ and ${\overline{\chi}}^{\beta}$, respectively. Thus, this DP definition guarantees that a randomized mechanism behaves similarly on  adjacent inputs, thus protecting the individual's privacy in the database by not providing enough information on the output of the randomized mechanism that can relate it (or trace back) to a particular input database.}

\begin{definition}[Probabilistic Differential Privacy]
Random mechanism $\Tilde{\mathcal{M}}$ satisfies $(\epsilon,\delta)$-probabilistic differential privacy, if for all vectors $\overline{\zeta}^{\beta}$ output space $\Delta_{n}$ of $\Tilde{\mathcal{M}}$ can be divided into sets $\Omega_1$, $\Omega_2$ such that:
\begin{align}
& \mathbb{P}[\Tilde{\mathcal{M}}(\overline{\zeta}^{\beta}) \in \Omega_{2}] \leq \delta,
\end{align}
and for all $\overline{\zeta}^{{\beta}}\! \simeq \overline{\eta}^{\beta}$ and for all $\overline{\chi}^{\beta} \in \Omega_{1}$:
\begin{align}
& \log \Bigg( \frac{\mathbb{P}[\Tilde{\mathcal{M}}(\overline{\zeta}^{\beta})=\overline{\chi}^{\beta}]}{\mathbb{P}[\Tilde{\mathcal{M}}(\textcolor{black}{\overline{\eta}^{{\beta}}}\!)=\overline{\chi}^{\beta}]} \Bigg) \leq \epsilon.
\end{align}
where $\epsilon \geq 0$, and $0\leq\delta \leq 1$ represents the maximum probability of breaching $\epsilon$-differential privacy.
\end{definition}
This definition of $(\epsilon,\delta)$-probabilistic differential privacy guarantees that $\epsilon$-differential privacy is achieved with at least $(1-\delta)$ probability \cite{Probablisitc_DP}. \textcolor{black}{Both parameters $\epsilon$ and $\delta$ are user-defined and can be selected  to satisfy privacy preferences of a particular decision-maker. A smaller value of $\epsilon$ reflects stronger privacy for a relatively less accurate response, and $\delta$ is usually chosen very small ($\leq 0.05$) to ensure high satisfaction of $\epsilon$-differential privacy.}

\begin{definition}[Dirichlet Mechanism] \label{definition~DM}
Let $\Tilde{\mathcal{M}}$ be a Dirichlet mechanism with input vector $\overline{\zeta}^{\beta}$ and output vector $\overline{\chi}^{\beta}$ according to the Dirichlet distribution, i.e.: 
\begin{align}
\begin{split}
& \mathbb{P}[\Tilde{\mathcal{M}}(k\overline{\zeta}^{\beta})\! =\! {\overline{\chi}}^{\beta}] = \\& \qquad \frac{1}{\text{B}(k\overline{\zeta}^{\beta})} \! \prod_{i=1}^{n-1} ({\overline{\chi}}^{\beta}\!)^{k\overline{\zeta}_{i}^{\beta} - 1} \!\Big(\! 1 -\!\! \sum_{i}^{n-1} {\overline{\chi}}_{i}^{\beta} \Big)^{k\overline{\zeta}_{n}^{\beta} - 1},
\end{split}
\end{align}
where $k \geq 0$ is a user-defined parameter that can be adjusted to provide a trade-off between privacy and the accuracy level of the Dirichlet mechanism, and $\text{B}(k\overline{\zeta}^{\beta})$ is the normalizing constant, which is a multivariate beta function,  expressed as:
\begin{align}
& \text{B}(k\overline{\zeta}^{\beta}) = \frac{\prod_{i=1}^{n}\Gamma(k\overline{\zeta}_{i}^{\beta})}{\Gamma(k\sum_{i=1}^{n}\overline{\zeta}_{i}^{\beta})}
\end{align}
\end{definition}
\noindent where $\Gamma(.)$ is the gamma function \cite{Digamma_func}.
The Dirichlet mechanism in Definition~\ref{definition~DM} is a randomized mechanism that takes a vector scaled by parameter $k$ as an input and provides an output vector according to the Dirichlet distribution.
\textcolor{black}{The output vector is more concentrated around the input vector for greater values of $k$, that is the value of $k$ controls the amount of noise added by the Dirichlet mechanism. When $k$ is small, the output vector tends to depart from the input vector, indicating better privacy, but this comes at the expense of the solution accuracy.}
\textcolor{black}{So, physically the effect of differential privacy on the load ensemble represented by a LS-MDP is its perturbation via Dirichlet noise without compromising its properties. To ensure privacy guarantee for this mechanism, we state:}

\begin{theorem}[Privacy Guarantee \cite{Dirichlet_mechanism}] \label{theorem_dm_privacy} \normalfont
The Dirichlet mechanism in Definition~\ref{definition~DM} satisfies $(\epsilon,\delta)$-probabilistic differential privacy with:
\begin{align}
\begin{split}
& \hspace{-10mm}\epsilon = \log \Bigg( \frac{\text{B}(k\omega,k(1-\overline{\omega}-\omega))}{\text{B}(k(\omega+\frac{h}{2}),k(1-\overline{\omega}-\omega-\frac{h}{2}))} \Bigg) \\& \qquad \qquad \qquad
+ \frac{kh}{2}\log \bigg( \frac{1-(|W|-1)\psi}{\psi} \bigg)
\end{split}\\
&\hspace{-10mm}\text{and} \hspace{12mm} \delta = 1 - \underset{\overline{\zeta}^{\beta}}{\min}\ \mathbb{P}[\Tilde{\mathcal{M}}(\overline{\zeta}^{\beta})\in\Omega_1] 
\end{align}
where $\omega,\overline{\omega},\psi\in(0,1)$, $h\in(0,1]$ and $W \subseteq [n-1]$. Note that output space $\Delta_{n}$ of  Dirichlet mechanism $\Tilde{\mathcal{M}}$ is split in two disjoint sets $\Omega_1 := \{ \overline{\zeta}^{\beta}\in\Delta_{n} | \overline{\mathcal{P}}^{\alpha\beta} \leq \psi \ , \forall \beta \in \mathcal{A} \}$ and $\Omega_2 := \{\Delta_{n}\setminus\Omega_1\}$.
\end{theorem}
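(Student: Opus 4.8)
The plan is to reason directly about the privacy-loss random variable of the Dirichlet mechanism and to exhibit the split $\Delta_{n}=\Omega_{1}\cup\Omega_{2}$ demanded by the definition of $(\epsilon,\delta)$-probabilistic differential privacy. First I would write down the log-likelihood ratio of the Dirichlet density (Definition~\ref{definition~DM}) evaluated at two adjacent concentration vectors $k\overline{\zeta}^{\beta}\simeq k\overline{\eta}^{\beta}$. By Definition~\ref{def:adjacent_vectors} the vectors $\overline{\zeta}^{\beta}$ and $\overline{\eta}^{\beta}$ agree in all but two coordinates, say $a$ and $b$, and since both lie on the simplex the perturbation is antisymmetric, $\overline{\eta}_{a}^{\beta}-\overline{\zeta}_{a}^{\beta}=-(\overline{\eta}_{b}^{\beta}-\overline{\zeta}_{b}^{\beta})$, with magnitude at most $h/2$ because $\|\overline{\zeta}^{\beta}-\overline{\eta}^{\beta}\|_{1}\le h$. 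Substituting into the density and cancelling the factors shared by the two exponents and by the two normalizing constants (note $\Gamma(k\sum_{i}\overline{\zeta}_{i}^{\beta})=\Gamma(k)=\Gamma(k\sum_{i}\overline{\eta}_{i}^{\beta})$, and $\overline{\zeta}_{a}^{\beta}+\overline{\zeta}_{b}^{\beta}=\overline{\eta}_{a}^{\beta}+\overline{\eta}_{b}^{\beta}$) yields
\[
\log\frac{\mathbb{P}[\Tilde{\mathcal{M}}(k\overline{\zeta}^{\beta})=\overline{\chi}^{\beta}]}{\mathbb{P}[\Tilde{\mathcal{M}}(k\overline{\eta}^{\beta})=\overline{\chi}^{\beta}]}
=\log\frac{\text{B}(k\overline{\eta}^{\beta})}{\text{B}(k\overline{\zeta}^{\beta})}
+\sum_{i\in\{a,b\}}k\big(\overline{\zeta}_{i}^{\beta}-\overline{\eta}_{i}^{\beta}\big)\log\overline{\chi}_{i}^{\beta},
\]
so the privacy loss decomposes into a deterministic term (a ratio of two‑argument beta functions in the two perturbed coordinates) plus a stochastic term that is linear in $\log(\overline{\chi}_{a}^{\beta}/\overline{\chi}_{b}^{\beta})$.

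Next I would bound the two terms separately. For the deterministic term, only $\Gamma(k\overline{\zeta}_{a}^{\beta})\Gamma(k\overline{\zeta}_{b}^{\beta})$ (against the analogous $\overline{\eta}$ factors) survive; writing the pair as $(\omega,\,1-\overline{\omega}-\omega)$, with $\overline{\omega}$ the frozen mass on the remaining $n-2$ coordinates, I would maximize over the admissible perturbation and argue — using log-convexity/monotonicity of $\Gamma$ — that the maximum is attained at the extreme point of the feasible interval, i.e. when the whole $\ell_{1}$ budget is spent shifting $\omega\mapsto\omega+h/2$; this produces exactly the first summand of $\epsilon$. For the stochastic term, I would take $\Omega_{1}$ to be the sub‑simplex on which the (at most two) perturbed output coordinates are pinned away from the corners — concretely, a designated index set $W\subseteq[n-1]$ of coordinates each bounded so that the small perturbed coordinate stays $\ge\psi$ while the large one stays $\le 1-(|W|-1)\psi$ by the simplex constraint. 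On $\Omega_{1}$ the stochastic term is therefore at most $\tfrac{kh}{2}\log\!\big(\tfrac{1-(|W|-1)\psi}{\psi}\big)$, which is the second summand; adding the two bounds gives the bound $\epsilon$ on $\Omega_{1}$ uniformly over adjacent pairs.

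Finally, by construction the partition $(\Omega_{1},\Omega_{2})$ used above is precisely the one in the statement, and the requirement $\mathbb{P}[\Tilde{\mathcal{M}}(\overline{\zeta}^{\beta})\in\Omega_{2}]\le\delta$ forces the stated value $\delta=1-\min_{\overline{\zeta}^{\beta}}\mathbb{P}[\Tilde{\mathcal{M}}(\overline{\zeta}^{\beta})\in\Omega_{1}]$; all that remains is to check that these two inequalities match the two conditions in the definition of $(\epsilon,\delta)$-probabilistic differential privacy, which is immediate. I expect the main obstacle to be the stochastic part: $\Omega_{1}$ must be chosen so that the bound $\tfrac{kh}{2}\log\!\big(\tfrac{1-(|W|-1)\psi}{\psi}\big)$ holds uniformly over \emph{every} adjacent pair — adjacency permitting any two coordinates to be the perturbed ones — while $\mathbb{P}[\Omega_{2}]$ stays small; calibrating this trade‑off through the free parameters $\psi$ and $|W|$, together with making the $\Gamma$‑ratio maximization rigorous, is where the real work lies, the rest being routine manipulation of the Dirichlet density.
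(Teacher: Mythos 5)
The paper does not prove Theorem~\ref{theorem_dm_privacy} at all---it imports the result wholesale from \cite{Dirichlet_mechanism}---and your sketch is, in outline, a faithful reconstruction of that reference's argument: the log-likelihood ratio of the Dirichlet density at two adjacent inputs splits into a deterministic part that collapses to a two-argument beta-function ratio in the perturbed pair $(\omega,\,1-\overline{\omega}-\omega)$ shifted by $h/2$, plus a stochastic part $k(\overline{\zeta}_{a}^{\beta}-\overline{\eta}_{a}^{\beta})\log(\overline{\chi}_{a}^{\beta}/\overline{\chi}_{b}^{\beta})$ that is bounded by $\tfrac{kh}{2}\log\big(\tfrac{1-(|W|-1)\psi}{\psi}\big)$ on the set $\Omega_{1}$ where the designated output coordinates are kept at least $\psi$ from the simplex boundary, with $\delta$ taken as the worst-case mass of $\Omega_{2}$. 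The two items you defer---proving the beta-ratio is maximized when the whole $\ell_{1}$ budget is spent on one coordinate pair, and calibrating $\Omega_{1}$, $\psi$, $W$ so the stochastic bound holds uniformly over every adjacent pair---are precisely the technical content carried out in \cite{Dirichlet_mechanism}, so your proposal is a correct plan rather than a complete proof; since the paper offers only the citation, there is no in-paper argument to contrast it with.
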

\begin{proof}
Detailed proof is presented in \cite{Dirichlet_mechanism}. 
\end{proof}

\begin{remark}
Alternately, $\delta$ in Theorem \ref{theorem_dm_privacy} can be selected as the maximum probability of privacy failure to divide the output space $\Delta_{n}$ of $\Tilde{\mathcal{M}}$ in disjoint sets $\Omega_1$ and $\Omega_2$ based on the value obtained for $\psi$, and $\epsilon$ is calculated accordingly. Interested readers are referred to \cite{Dirichlet_mechanism} for more details.
\end{remark}
Owing to Theorem \ref{theorem_dm_privacy}, the Dirichlet mechanism in Definition~\ref{definition~DM} makes it possible to obtain private default transition probabilities $({\Tilde{\overline{\mathcal{P}}}}^{\alpha\beta})$ to be used in \eqref{base_mdp}, which satisfy $(\epsilon,\delta)$-probabilistic differential privacy and lead to a private optimal policy using the property of post processing. This is possible since differential privacy is immune to post-processing \cite{Dwork_book}: 

\begin{definition}[Post-Processing] \label{post_processing}
Let $\Tilde{\mathcal{M}}$ be a random, $(\epsilon,\delta)$-differential private mechanism. Let $f$ be an arbitrary mapping. Then $f \circ \Tilde{\mathcal{M}}$ is $(\epsilon,\delta)$-differential private.
\end{definition}
As a result of Definition~\ref{post_processing}, if the LS-MDP optimization in \eqref{base_mdp} is fed with default transition probabilities ${\overline{\mathcal{P}}}^{\alpha\beta}$ and applies the Dirichlet mechanism as in Definition~\ref{definition~DM} to the default transition probabilities, then Theorem~\ref{theorem_dm_privacy} and Definition~\ref{post_processing} ensure that the LS-MDP optimization in \eqref{base_mdp} outputs private controlled transition probabilities, which we represent by $\Tilde{\mathcal{P}}_{t}^{\alpha\beta}$ as shown in Fig.~\ref{fig:DP_LS-MDP}. Sections \ref{Section_Stochastic_Approach} and \ref{Section_Avg_Value_Approach} derive differentially private optimal policies using the application of the Dirichlet mechanism.


\begin{figure}[!t]
\centering 
\includegraphics[width=\columnwidth,trim={0 0 100cm 0.1cm}]{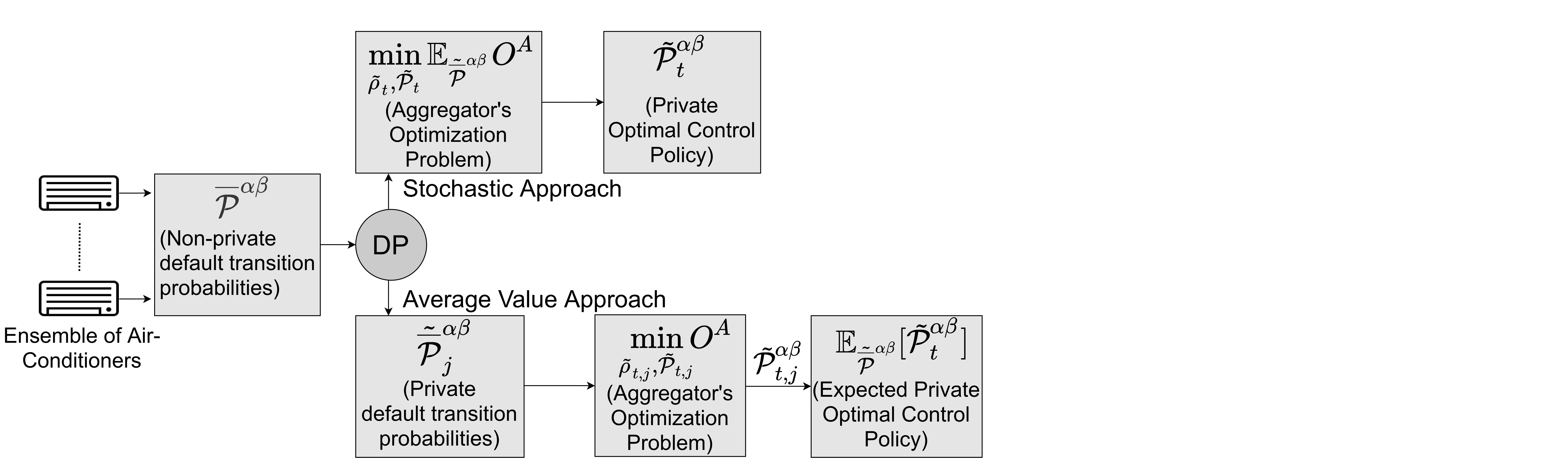}
\caption{LS-MDP with differential privacy}
\label{fig:DP_LS-MDP}
\end{figure}

\section{Stochastic Approach} \label{Section_Stochastic_Approach}
We leverage stochastic optimization to solve the LS-MDP when DP is employed to privatize default transition probabilities. Accordingly, the objective function in Eq.~\eqref{MDP:obj} is re-written for the DP LS-MDP as:
\begin{align}
& \underset{\substack{\Tilde{\rho},\Tilde{\mathcal{P}}}}{\text{min}} \ \mathbb{E}_{{\Tilde{\overline{\mathcal{P}}}^{\alpha\beta}}} \mathbb{E}_{\Tilde{\rho}}
\!\!\sum_{t \in \mathcal{T}}\! \! \sum_{\alpha \in \mathcal{A}}\!\! \bigg(\!\!-U_{t+1}^{\alpha}\! +\!\! \sum_{\beta \in \mathcal{A}}\! \gamma \log\! \frac{\Tilde{\mathcal{P}}_{t}^{\alpha\beta}}{\Tilde{\overline{\mathcal{P}}}^{\alpha\beta}}\bigg) = \nonumber \\
&\!\!\!\underset{\substack{\Tilde{\rho},\Tilde{\mathcal{P}}}}{\text{min}} \ \mathbb{E}_{\Tilde{\rho}}\!\! \sum_{t \in \mathcal{T}} \! \! \sum_{\alpha \in \mathcal{A}}\!\!\! \Big\{\!\!-\!U_{t+1}^{\alpha}\! +\!\gamma\!\! \sum_{\beta \in \mathcal{A}}\!\! \big( \log\! \Tilde{\mathcal{P}}_{t}^{\alpha\beta}\!\!-\!\mathbb{E}_{\Tilde{\overline{\mathcal{P}}}^{\alpha\beta}}\! [\log{\Tilde{\overline{\mathcal{P}}}^{\alpha\beta}}] \big)\!\! \Big\} \label{MDP:obj_diff} 
\end{align}
Given \eqref{MDP:obj_diff}, the optimization in \eqref{base_mdp} is recast as:
\begin{subequations}
\begin{align}
 & \text{Eq. }\eqref{MDP:obj_diff} \\
 &\text{s.t.} \ \ \ \text{Eq. } \eqref{MDP_evol}-\eqref{mdp_integrality},
\end{align} \label{diff_priv_mdp}
\end{subequations}
where $\Tilde{\overline{\mathcal{P}}}^{\alpha\beta}$ represents privatized default transition probabilities and the decision-maker employing \eqref{diff_priv_mdp} can select a certain policy on how to randomize the original default transition probabilities. For the reasons stated in Section~\ref{sec:dp_ls_mdp}, we use the Dirichlet mechanism for this randomization as given in Definition~\ref{definition~DM}.
To solve the DP LS-MDP formulation in \eqref{diff_priv_mdp}, we prove: 

\begin{theorem} \label{theorem_a} \normalfont
Let \eqref{diff_priv_mdp} model a load ensemble as a differential private LS-MDP. Then, the differentially private optimal control policy $\forall \beta,\alpha\in\mathcal{A}, \forall t\in\mathcal{T},$ is computed as:
\begin{align}
\begin{split}
&\Tilde{\mathcal{P}}_{t}^{\alpha \beta} \! = \frac{\exp(\mathbb{E}_{\Tilde{\overline{\mathcal{P}}}^{\alpha\beta}} [\log {\Tilde{\overline{\mathcal{P}}}^{\alpha\beta}}\!]) \Tilde{z}^{\alpha}_{t+1}}{\sum\limits_{\alpha\in\mathcal{A}}\exp(\mathbb{E}_{\Tilde{\overline{\mathcal{P}}}^{\alpha\beta}} [\log {\Tilde{\overline{\mathcal{P}}}^{\alpha\beta}}\!])\Tilde{z}^{\alpha}_{t+1}}, \label{optimal_policy_dp}
\end{split}
\end{align}
where $\forall \beta\in\mathcal{A}, \forall t\in\mathcal{T}$:
\begin{align}
&\Tilde{z}^{\beta}_{t} = \text{exp}\Big(\frac{U_{t}^{\beta}}{\gamma}\Big)\!\! \sum_{\alpha}\!\exp(\mathbb{E}_{\Tilde{\overline{\mathcal{P}}}^{\alpha\beta}} [\log {\Tilde{\overline{\mathcal{P}}}^{\alpha\beta}}])\Tilde{z}^{\alpha}_{t+1}, \label{bellmen_blackuced_dp}
\end{align}
\end{theorem}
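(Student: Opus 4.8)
The plan is to recognize that, once \eqref{diff_priv_mdp} is rewritten in the reduced form \eqref{MDP:obj_diff}, it is structurally identical to the plain LS-MDP \eqref{base_mdp} with the default transition matrix $\overline{\mathcal{P}}$ replaced by a surrogate, so that Theorem~\ref{theorem_0} applies almost verbatim. Concretely, define the surrogate default weights
\[
\hat{\mathcal{P}}^{\alpha\beta}\ :=\ \exp\!\big(\mathbb{E}_{\Tilde{\overline{\mathcal{P}}}^{\alpha\beta}}[\log \Tilde{\overline{\mathcal{P}}}^{\alpha\beta}]\big),\qquad \forall\,\alpha,\beta\in\mathcal{A},
\]
the expectation being over the Dirichlet mechanism of Definition~\ref{definition~DM}. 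Since $\log\hat{\mathcal{P}}^{\alpha\beta}=\mathbb{E}_{\Tilde{\overline{\mathcal{P}}}^{\alpha\beta}}[\log \Tilde{\overline{\mathcal{P}}}^{\alpha\beta}]$ is a fixed scalar (data, not a decision variable), the objective \eqref{MDP:obj_diff} is exactly \eqref{MDP:obj} under $\overline{\mathcal{P}}^{\alpha\beta}\mapsto\hat{\mathcal{P}}^{\alpha\beta}$, while the constraints \eqref{MDP_evol}--\eqref{mdp_integrality} are untouched.

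First I would justify the reduction of the double expectation to \eqref{MDP:obj_diff}: because the privatized policy $\Tilde{\mathcal{P}}_t$ is a non-anticipative (here-and-now) decision, it is independent of the Dirichlet draw, so linearity of expectation gives $\mathbb{E}_{\Tilde{\overline{\mathcal{P}}}^{\alpha\beta}}[\log(\Tilde{\mathcal{P}}_t^{\alpha\beta}/\Tilde{\overline{\mathcal{P}}}^{\alpha\beta})]=\log\Tilde{\mathcal{P}}_t^{\alpha\beta}-\mathbb{E}_{\Tilde{\overline{\mathcal{P}}}^{\alpha\beta}}[\log\Tilde{\overline{\mathcal{P}}}^{\alpha\beta}]$, and the inner expectation over $\Tilde{\rho}$ passes through unaffected. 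Then I would re-run the derivation behind Theorem~\ref{theorem_0} (Appendix~\ref{appendix_proof_theorems}) on the surrogate data $(\hat{\mathcal{P}},U)$: forming the Lagrangian of \eqref{diff_priv_mdp} with multipliers on \eqref{MDP_evol} and \eqref{mdp_integrality}, stationarity in $\Tilde{\mathcal{P}}_t^{\alpha\beta}$ produces a Gibbs/Boltzmann form $\Tilde{\mathcal{P}}_t^{\alpha\beta}\propto\hat{\mathcal{P}}^{\alpha\beta}\Tilde{z}_{t+1}^{\alpha}$; the integrality constraint \eqref{mdp_integrality} pins the normalizer to $\sum_{\alpha\in\mathcal{A}}\hat{\mathcal{P}}^{\alpha\beta}\Tilde{z}_{t+1}^{\alpha}$, yielding \eqref{optimal_policy_dp}; and back-substitution into the costate/Bellman recursion collapses it to the linear recursion \eqref{bellmen_blackuced_dp} with $\Tilde{z}_t^{\beta}=\exp(U_t^{\beta}/\gamma)\sum_{\alpha}\hat{\mathcal{P}}^{\alpha\beta}\Tilde{z}_{t+1}^{\alpha}$. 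Unwinding $\hat{\mathcal{P}}^{\alpha\beta}=\exp(\mathbb{E}_{\Tilde{\overline{\mathcal{P}}}^{\alpha\beta}}[\log\Tilde{\overline{\mathcal{P}}}^{\alpha\beta}])$ then reproduces \eqref{optimal_policy_dp}--\eqref{bellmen_blackuced_dp} as stated. Equivalently, one may simply invoke Theorem~\ref{theorem_0} as a black box applied to $(\hat{\mathcal{P}},U)$ and read off the conclusion.

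The step I expect to need the most care is that $\hat{\mathcal{P}}$ is, in general, \emph{not} column-stochastic --- by Jensen's inequality $\sum_{\alpha}\hat{\mathcal{P}}^{\alpha\beta}\le\sum_{\alpha}\mathbb{E}[\Tilde{\overline{\mathcal{P}}}^{\alpha\beta}]=\sum_{\alpha}\overline{\mathcal{P}}^{\alpha\beta}=1$, with equality essentially never. Hence one cannot blindly reuse Theorem~\ref{theorem_0} if its proof relied on $\sum_{\alpha}\overline{\mathcal{P}}^{\alpha\beta}=1$; I would verify that column-stochasticity of the \emph{default} is nowhere used, since the minimizer of $\sum_{\alpha}p^{\alpha}\log(p^{\alpha}/q^{\alpha})+\sum_{\alpha}p^{\alpha}c^{\alpha}$ over the probability simplex is $p^{\alpha}=q^{\alpha}e^{-c^{\alpha}}/\sum_{\alpha}q^{\alpha}e^{-c^{\alpha}}$ for \emph{any} nonnegative weight vector $q$, and the trailing normalization restores stochasticity of $\Tilde{\mathcal{P}}_t$; only the linearizing substitution for $\Tilde{z}$ matters, and it carries over verbatim. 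A secondary point is the treatment of unobserved transitions: if $\overline{\mathcal{P}}^{\alpha\beta}=0$ the Dirichlet concentration parameter vanishes and $\mathbb{E}[\log\Tilde{\overline{\mathcal{P}}}^{\alpha\beta}]=-\infty$, i.e. $\hat{\mathcal{P}}^{\alpha\beta}=0$, consistent with the convention of \eqref{base_mdp} that forces $\Tilde{\mathcal{P}}_t^{\alpha\beta}=0$; I would make this rigorous by restricting every sum over $\alpha$ to the support $\{\alpha:\overline{\mathcal{P}}^{\alpha\beta}>0\}$, on which all concentration parameters are strictly positive and the digamma-type expectations are finite.
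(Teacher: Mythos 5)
Your proposal is correct and takes essentially the same route as the paper: the paper's Appendix~\ref{appendix_proof_theorems} proves Theorems~\ref{theorem_0} and \ref{theorem_a} jointly by running the Bellman/KL-minimization derivation with a generic nonnegative weight $\mathcal{Z}^{\alpha\beta}$ and then specializing it to $\exp(\mathbb{E}_{\Tilde{\overline{\mathcal{P}}}^{\alpha\beta}}[\log\Tilde{\overline{\mathcal{P}}}^{\alpha\beta}])$, which is exactly your surrogate $\hat{\mathcal{P}}^{\alpha\beta}$. Your extra checks --- that column-stochasticity of the default is never needed because the simplex-constrained KL minimizer is normalized anyway, and the restriction to the support when $\overline{\mathcal{P}}^{\alpha\beta}=0$ --- are left implicit in the paper but do not alter the argument.
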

\begin{proof} 
See proof in Appendix \ref{appendix_proof_theorems}. 
\end{proof}
\allowdisplaybreaks
We note that the optimal control policy in \eqref{optimal_policy_dp} achieves differential privacy because the stochastic LS-MDP internalizes the Dirichlet noise, thus privatizing the optimization itself and making its output  private \cite{Dwork_book}. 
Notably, the outcome of Theorem~\ref{theorem_a} depends on the evaluation of $\mathbb{E}_{\Tilde{\overline{\mathcal{P}}}^{\alpha\beta}}\! [\log{\Tilde{\overline{\mathcal{P}}}^{\alpha\beta}}]$. We, therefore, propose two methods for computing  $\mathbb{E}_{\Tilde{\overline{\mathcal{P}}}^{\alpha\beta}}\! [\log{\Tilde{\overline{\mathcal{P}}}^{\alpha\beta}}]$ in Theorem \ref{theorem_a} under a given Dirichlet distribution\footnote{The term $\mathbb{E}_{\Tilde{\overline{\mathcal{P}}}^{\alpha\beta}}\! [\log{\Tilde{\overline{\mathcal{P}}}^{\alpha\beta}}]$ is generic and other distributions such as Normal and Laplace distributions can be used to further evaluate it. However, the output of non-Dirichlet distributions (where output does not belong to the unit simplex) needs to be adjusted or normalized to ensure that every row in the transition probability matrix remains equal to one. In other words, the probability of moving from present state $\beta$ to all possible next states $\alpha$ is equal to one.} and, thus, for enabling its computational tractable solution. As described below, the first method is based on the  Taylor approximation, while the second one is based on Digamma functions and yields an exact equivalent.  

\subsection{Taylor Approximation} \label{Taylor_Approximation}
We approximate $\mathbb{E}_{\Tilde{\overline{\mathcal{P}}}^{\alpha\beta}}\! [\log{\Tilde{\overline{\mathcal{P}}}^{\alpha\beta}}]$ by the second-order Taylor expansion to represent this term as a function of the first and second statistical moments (mean and variance): 
\begin{align}
\mathbb{E}_{\Tilde{\overline{\mathcal{P}}}^{\alpha\beta}}\! [\log{\Tilde{\overline{\mathcal{P}}}^{\alpha\beta}}] &\approx \log\mathbb{E}_{\Tilde{\overline{\mathcal{P}}}^{\alpha\beta}}\! [{\Tilde{\overline{\mathcal{P}}}^{\alpha\beta}}] - \frac{\text{Var}(\Tilde{\overline{\mathcal{P}}}^{\alpha\beta})}{2(\mathbb{E}_{\Tilde{\overline{\mathcal{P}}}^{\alpha\beta}}\! [{\Tilde{\overline{\mathcal{P}}}^{\alpha\beta}}])^2}, \label{taylor_app_1}
\end{align}
where $\text{Var}(\cdot)$ denotes variance.  Next, using the notion of the Dirichlet distribution, we invoke for the mean $(\mathbb{E}_{\Tilde{\overline{\mathcal{P}}}^{\alpha\beta}}\! [{\Tilde{\overline{\mathcal{P}}}^{\alpha\beta}}])$ in Eq.~\eqref{taylor_app_1}:
\begin{align}
\begin{split}
& \mathbb{E}_{\Tilde{\overline{\mathcal{P}}}^{\alpha\beta}}\! [{\Tilde{\overline{\mathcal{P}}}^{\alpha\beta}}] \!= \frac{k{\overline{\mathcal{P}}}^{\alpha\beta}}{\sum\limits_{\alpha\in\mathcal{A}}\!\!k\overline{\mathcal{P}}^{\alpha \beta}}\! = \frac{k{\overline{\mathcal{P}}}^{\alpha\beta}}{k\!\!\!\sum\limits_{\alpha\in\mathcal{A}}\!\!\overline{\mathcal{P}}^{\alpha \beta}}\! = \frac{k{\overline{\mathcal{P}}}^{\alpha\beta}}{k(1)} \!= \overline{\mathcal{P}}^{\alpha\beta}.
\end{split} \label{mean_taylor}
\end{align}
Similarly, using the notion of variance for the Dirichlet distribution, we obtain for the variance term in Eq.~\eqref{taylor_app_1}:
\begin{align}
\begin{split}
& \text{Var}(\Tilde{\overline{\mathcal{P}}}^{\alpha\beta}) = \frac{k\overline{\mathcal{P}}^{\alpha\beta}(\sum\limits_{\alpha\in\mathcal{A}}k\overline{\mathcal{P}}^{\alpha \beta}-k\overline{\mathcal{P}}^{\alpha\beta})}{(\sum\limits_{\alpha\in\mathcal{A}}k\overline{\mathcal{P}}^{\alpha \beta})^{2}(\sum\limits_{\alpha\in\mathcal{A}}k\overline{\mathcal{P}}^{\alpha \beta}+1)} = \\& \frac{k\overline{\mathcal{P}}^{\alpha\beta}(k\!\!\sum\limits_{\alpha\in\mathcal{A}}\!\!\overline{\mathcal{P}}^{\alpha \beta}-k\overline{\mathcal{P}}^{\alpha\beta})}{(k\!\!\sum\limits_{\alpha\in\mathcal{A}}\!\!\overline{\mathcal{P}}^{\alpha \beta})^{2}(k\!\!\sum\limits_{\alpha\in\mathcal{A}}\!\!\overline{\mathcal{P}}^{\alpha \beta}+1)} = \frac{\overline{\mathcal{P}}^{\alpha\beta}(1-\overline{\mathcal{P}}^{\alpha\beta})}{k+1}
\end{split} \label{variance_taylor}
\end{align}
Given \eqref{mean_taylor} and \eqref{variance_taylor}, the resulting value of Eq.~\eqref{taylor_app_1} can be approximated as:
\begin{align}
\begin{split}
&\mathbb{E}_{\Tilde{\overline{\mathcal{P}}}^{\alpha\beta}}\! [\log{\Tilde{\overline{\mathcal{P}}}^{\alpha\beta}}] \approx \log \overline{\mathcal{P}}^{\alpha\beta} - \frac{1-\overline{\mathcal{P}}^{\alpha\beta}}{2(\overline{\mathcal{P}}^{\alpha\beta})(k+1)}
\end{split} \label{expected_Taylor}
\end{align}
Given the DP LS-MDP formulation in \eqref{diff_priv_mdp} and expression in \eqref{expected_Taylor}, we propose: 

\begin{proposition} \label{theorem_b} \normalfont
Let \eqref{diff_priv_mdp} model a load ensemble as a differential private LS-MDP and let $\mathbb{E}_{\Tilde{\overline{\mathcal{P}}}^{\alpha\beta}}\! [\log{\Tilde{\overline{\mathcal{P}}}^{\alpha\beta}}]$ be approximated by the second-order Taylor approximation as in \eqref{expected_Taylor}. Then, the differentially private optimal control policy $\forall \beta,\alpha\in\mathcal{A}, \forall t\in\mathcal{T},$ is computed as:
\begin{align}
\begin{split}
&\Tilde{\mathcal{P}}_{t}^{\alpha \beta} = \frac{\overline{\mathcal{P}}^{\alpha\beta} \exp \Big[\frac{-1+\overline{\mathcal{P}}^{\alpha\beta}}{2 \overline{\mathcal{P}}^{\alpha\beta}(k+1)}\Big] \Tilde{z}^{\alpha}_{t+1}}{\sum\limits_{\alpha\in\mathcal{A}}\overline{\mathcal{P}}^{\alpha\beta} \exp \Big[\frac{-1+\overline{\mathcal{P}}^{\alpha\beta}}{2 \overline{\mathcal{P}}^{\alpha\beta}(k+1)}\Big] \Tilde{z}^{\alpha}_{t+1}}, \label{optimal_policy_dp_Taylor}
\end{split}
\end{align}
where $\forall \beta\in\mathcal{A}, \forall t\in\mathcal{T}$ and:
\begin{align}
&\Tilde{z}^{\beta}_{t} = \text{exp}\Big(\frac{U_{t}^{\beta}}{\gamma}\Big) \sum_{\alpha\in\mathcal{A}} \overline{\mathcal{P}}^{\alpha\beta} \exp\bigg[\frac{-1+\overline{\mathcal{P}}^{\alpha\beta}}{2\overline{\mathcal{P}}^{\alpha\beta}(k+1)}\bigg] \Tilde{z}^{\alpha}_{t+1}, \label{bellmen_blackuced_dp_Taylor}
\end{align}
\end{proposition}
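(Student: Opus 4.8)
The plan is to obtain Proposition~\ref{theorem_b} as a direct specialization of Theorem~\ref{theorem_a}. Theorem~\ref{theorem_a} already expresses the differentially private optimal policy \eqref{optimal_policy_dp} and the accompanying value recursion \eqref{bellmen_blackuced_dp} entirely through the single quantity $\exp\!\big(\mathbb{E}_{\Tilde{\overline{\mathcal{P}}}^{\alpha\beta}}[\log{\Tilde{\overline{\mathcal{P}}}^{\alpha\beta}}]\big)$, which here plays exactly the structural role that $\overline{\mathcal{P}}^{\alpha\beta}$ plays in the non-private policy \eqref{optimal_policy}. So once this quantity is evaluated under the stated second-order Taylor surrogate, the two displayed identities of the proposition follow by plain substitution into \eqref{optimal_policy_dp}--\eqref{bellmen_blackuced_dp}; no fresh optimization is needed, since feeding \eqref{diff_priv_mdp} with the Taylor surrogate for $\mathbb{E}_{\Tilde{\overline{\mathcal{P}}}^{\alpha\beta}}[\log{\Tilde{\overline{\mathcal{P}}}^{\alpha\beta}}]$ is precisely the hypothesis of the proposition.

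First I would record the scalar identity $\mathbb{E}[\log X]\approx\log\mathbb{E}[X]-\mathrm{Var}(X)/(2\,\mathbb{E}[X]^2)$, obtained by expanding $x\mapsto\log x$ about $x_0=\mathbb{E}[X]$ and taking expectations (the first-order term vanishes); applied to $X=\Tilde{\overline{\mathcal{P}}}^{\alpha\beta}$ this is \eqref{taylor_app_1}. Next I would substitute the Dirichlet moments: because the mechanism of Definition~\ref{definition~DM} uses concentration vector $k\overline{\zeta}^{\beta}$ with $\sum_{\alpha}\overline{\mathcal{P}}^{\alpha\beta}=1$, the coordinate mean reduces to $\mathbb{E}_{\Tilde{\overline{\mathcal{P}}}^{\alpha\beta}}[\Tilde{\overline{\mathcal{P}}}^{\alpha\beta}]=\overline{\mathcal{P}}^{\alpha\beta}$ as in \eqref{mean_taylor} and the coordinate variance reduces to $\mathrm{Var}(\Tilde{\overline{\mathcal{P}}}^{\alpha\beta})=\overline{\mathcal{P}}^{\alpha\beta}(1-\overline{\mathcal{P}}^{\alpha\beta})/(k+1)$ as in \eqref{variance_taylor}, the cancellations driven by $\sum_{\alpha}k\overline{\mathcal{P}}^{\alpha\beta}=k$. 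Inserting these into \eqref{taylor_app_1} gives \eqref{expected_Taylor}, and exponentiating yields
\[
\exp\!\big(\mathbb{E}_{\Tilde{\overline{\mathcal{P}}}^{\alpha\beta}}[\log{\Tilde{\overline{\mathcal{P}}}^{\alpha\beta}}]\big)\;\approx\;\overline{\mathcal{P}}^{\alpha\beta}\exp\!\Big[\frac{-1+\overline{\mathcal{P}}^{\alpha\beta}}{2\,\overline{\mathcal{P}}^{\alpha\beta}(k+1)}\Big].
\]
Plugging this expression for $\exp(\mathbb{E}_{\Tilde{\overline{\mathcal{P}}}^{\alpha\beta}}[\log{\Tilde{\overline{\mathcal{P}}}^{\alpha\beta}}])$ into both the numerator and denominator of \eqref{optimal_policy_dp} produces \eqref{optimal_policy_dp_Taylor}, and the same substitution in \eqref{bellmen_blackuced_dp} produces \eqref{bellmen_blackuced_dp_Taylor}; the shared denominator in \eqref{optimal_policy_dp_Taylor} renormalizes the row so the integrality constraint \eqref{mdp_integrality} is preserved, and transitions with $\overline{\mathcal{P}}^{\alpha\beta}=0$ stay forbidden because the correction factor multiplies $\overline{\mathcal{P}}^{\alpha\beta}$.

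I expect no deep obstacle; the only delicate points are bookkeeping. One must carry the sign and the $1/(k+1)$ factor of the Dirichlet variance correctly through \eqref{variance_taylor}--\eqref{expected_Taylor}, and one should state plainly that \eqref{optimal_policy_dp_Taylor}--\eqref{bellmen_blackuced_dp_Taylor} are optimal for the \emph{surrogate} objective obtained by replacing $\mathbb{E}_{\Tilde{\overline{\mathcal{P}}}^{\alpha\beta}}[\log{\Tilde{\overline{\mathcal{P}}}^{\alpha\beta}}]$ in \eqref{MDP:obj_diff} by its second-order Taylor expansion, so that the ``$\approx$'' in \eqref{expected_Taylor} legitimately propagates into the policy rather than being an exact equality. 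As a consistency check I would also note that $\overline{\mathcal{P}}^{\alpha\beta}\in(0,1]$ forces $-1+\overline{\mathcal{P}}^{\alpha\beta}\le 0$, so each multiplicative correction $\exp[(-1+\overline{\mathcal{P}}^{\alpha\beta})/(2\overline{\mathcal{P}}^{\alpha\beta}(k+1))]$ lies in $(0,1]$ and merely re-weights the surviving transitions, with the distortion vanishing as $k\to\infty$, recovering Theorem~\ref{theorem_0}.
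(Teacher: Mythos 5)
Your proposal is correct and follows essentially the same route as the paper: the paper likewise proves Proposition~\ref{theorem_b} by substituting the Taylor-approximated expression \eqref{expected_Taylor} (whose Dirichlet mean and variance computations already appear in \eqref{mean_taylor}--\eqref{variance_taylor}) directly into \eqref{optimal_policy_dp} and \eqref{bellmen_blackuced_dp} of Theorem~\ref{theorem_a}. Your added remarks on the surrogate-objective interpretation of the ``$\approx$'' and the sign/limit sanity checks are sensible refinements but not a different argument.
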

\begin{proof} 
Our proof follows immediately from Theorem~\ref{theorem_a}. By using Eq.~\eqref{expected_Taylor} in Eqs.~\eqref{optimal_policy_dp} and \eqref{bellmen_blackuced_dp}, we end up with the results in Eqs.~\eqref{optimal_policy_dp_Taylor} and \eqref{bellmen_blackuced_dp_Taylor} and conclude the proof.
\end{proof}


\textcolor{black}{Since \eqref{diff_priv_mdp} handles the noise from Dirichlet distribution introduced for DP, its outcome is more costly as compared to the non-private optimal policy from Theorem~\ref{theorem_0}.} The difference in cost between the private and non-private policies, therefore, constitutes the cost of privacy that the decision-maker will pay for privatizing its data. Notably, the cost of privacy incured by \eqref{diff_priv_mdp} can be evaluated a priori as follows:

\begin{corollary} \label{cost_stochastic_taylor} \normalfont
Given the differentially private optimal control policy in \eqref{optimal_policy_dp_Taylor}, the cost of privacy for the stochastic approach using the second-order Taylor approximation is calculated as:
\begin{align} 
\begin{split}
&\Delta C \!= \gamma\!\! \sum_{\alpha \in \mathcal{A}}\!\! \frac{\overline{\mathcal{P}}^{\alpha\beta} \!\exp \Big[\frac{-1+\overline{\mathcal{P}}^{\alpha\beta}}{2 \overline{\mathcal{P}}^{\alpha\beta}(k+1)}\Big] \Tilde{z}^{\alpha}_{t+1}}{\sum\limits_{\alpha\in\mathcal{A}}\!\!\overline{\mathcal{P}}^{\alpha\beta}\! \exp \Big[\frac{-1+\overline{\mathcal{P}}^{\alpha\beta}}{2 \overline{\mathcal{P}}^{\alpha\beta}\!(k\!+\!1)}\Big] \Tilde{z}^{\alpha}_{t+1}} \bigg[\frac{-1+\overline{\mathcal{P}}^{\alpha\beta}}{2\overline{\mathcal{P}}^{\alpha\beta}(k+1)}\bigg] \\&\!\! + \!\gamma \log\!\! \sum_{\alpha \in \mathcal{A}}\!\overline{\mathcal{P}}^{\alpha \beta}z_{t+1}^{\alpha} - \gamma \!\log\!\! \sum_{\alpha\in\mathcal{A}}\! \overline{\mathcal{P}}^{\alpha\beta} \!\!\exp\! \Big[\frac{-1+\overline{\mathcal{P}}^{\alpha\beta}}{2 \overline{\mathcal{P}}^{\alpha\beta}\!(k\!+\!1)}\Big] \Tilde{z}^{\alpha}_{t+1}
\end{split} \label{eq:cost_privacy_stochastic_taylor}
\end{align}
\end{corollary}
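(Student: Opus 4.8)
The plan is to read $\Delta C$ as the gap, at a fixed stage $t$ and departure state $\beta$, between (i) the value that the privatized optimal policy $\Tilde{\mathcal{P}}_t^{\alpha\beta}$ of Proposition~\ref{theorem_b} incurs in the \emph{original} objective of \eqref{base_mdp}, and (ii) the value that the non-private optimal policy $\mathcal{P}_t^{\alpha\beta}$ of Theorem~\ref{theorem_0} incurs in that same objective; the full-horizon cost of privacy is then recovered by weighting with $\rho$ and summing over $t\in\mathcal{T}-1$. I would carry this out by splitting $\Delta C$ into a ``reference-mismatch'' contribution and an ``optimal-value'' contribution, and evaluating each via the closed forms and Bellman recursions already proved.

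First I would isolate the reference-mismatch term. The privatized program \eqref{diff_priv_mdp} and the original program \eqref{base_mdp}, both evaluated at the \emph{same} policy $\Tilde{\mathcal{P}}$, differ only in the reference used inside the logarithm of the discomfort term: $\log\overline{\mathcal{P}}^{\alpha\beta}$ in \eqref{MDP:obj} versus $\mathbb{E}_{\Tilde{\overline{\mathcal{P}}}^{\alpha\beta}}[\log\Tilde{\overline{\mathcal{P}}}^{\alpha\beta}]$ in \eqref{MDP:obj_diff}. Subtracting, the per-stage mismatch from state $\beta$ is $\gamma\sum_{\alpha}\Tilde{\mathcal{P}}_t^{\alpha\beta}\big(\mathbb{E}_{\Tilde{\overline{\mathcal{P}}}^{\alpha\beta}}[\log\Tilde{\overline{\mathcal{P}}}^{\alpha\beta}]-\log\overline{\mathcal{P}}^{\alpha\beta}\big)$, and inserting the second-order surrogate \eqref{expected_Taylor} collapses the parenthesis to $\tfrac{-1+\overline{\mathcal{P}}^{\alpha\beta}}{2\overline{\mathcal{P}}^{\alpha\beta}(k+1)}$. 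Writing $\Tilde{\mathcal{P}}_t^{\alpha\beta}$ out via \eqref{optimal_policy_dp_Taylor} then reproduces exactly the first line of \eqref{eq:cost_privacy_stochastic_taylor}.

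Next I would handle the optimal-value term, i.e.\ the difference between the optimal cost-to-go of the privatized problem and that of the original problem at $(\beta,t)$. Using the LS-MDP identification of these values with $-\gamma\log\Tilde{z}_t^\beta$ and $-\gamma\log z_t^\beta$ and substituting the recursions \eqref{bellmen_blackuced_dp_Taylor} and \eqref{bellmen_blackuced_det}, the common factor $\exp(U_t^\beta/\gamma)$ cancels inside the two logarithms, leaving $\gamma\log\sum_{\alpha}\overline{\mathcal{P}}^{\alpha\beta}z_{t+1}^\alpha-\gamma\log\sum_{\alpha}\overline{\mathcal{P}}^{\alpha\beta}\exp\!\big[\tfrac{-1+\overline{\mathcal{P}}^{\alpha\beta}}{2\overline{\mathcal{P}}^{\alpha\beta}(k+1)}\big]\Tilde{z}_{t+1}^\alpha$, which is the second line of \eqref{eq:cost_privacy_stochastic_taylor}. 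Adding the two contributions yields the claimed expression.

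The main obstacle, and the step I would treat most carefully, is the bookkeeping around the expectation operator $\mathbb{E}_\rho$ and the stage indexing: \eqref{MDP:obj} carries $-U_{t+1}^\alpha$ on the arrival state whereas Theorem~\ref{theorem_0} carries $U_t^\beta$ on the departure state, so to make the cancellation of the utility inside the logarithms in the second step clean I would first fix a per-stage convention in which the optimal value at $(\beta,t)$ is literally $-\gamma\log z_t^\beta$ (equivalently, absorb the utility into the state cost as $-U_t^\beta$). A secondary subtlety is whether $\Delta C$ is the per-stage increment or the full cost-to-go gap realized by the private policy in the original objective: the displayed formula equals the per-stage reference-mismatch plus the (already cumulative) difference of the two optimal values, so I would present it as the cost of privacy attributable to state $\beta$ at time $t$ and note that the horizon-wide figure follows by averaging over $\rho_t^\beta$ and summing over $t$.
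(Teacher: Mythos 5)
Your proposal is correct and follows essentially the same route as the paper: the paper's Appendix proof likewise forms $\Delta C=\Tilde{\varphi}^{\beta}_{t}-\varphi^{\beta}_{t}$ from the Bellman value functions, substitutes the closed-form policies and the recursions for $z$ and $\Tilde{z}$, and then inserts the Taylor surrogate, its general expression \eqref{cost_of_privacy_gen} being exactly your ``reference-mismatch'' term (first two summands) plus your ``optimal-value difference'' (last two summands). The per-stage/continuation subtlety you flag is treated the same way there (the continuation at $t+1$ is taken as $-\gamma\log\Tilde{z}^{\alpha}_{t+1}$, i.e.\ the DP problem's own value), so your reading of $\Delta C$ as the stagewise cost of privacy at $(\beta,t)$ matches the paper's formula.
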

\begin{proof} 
See proof in Appendix \ref{appendix_cost_stochastic}.
\end{proof}
While the result in \eqref{eq:cost_privacy_stochastic_taylor} is unwieldy, it is valuable for evaluating the effect of the Dirichlet distribution on the operating cost prior to implementing the noise for ensuring differentially private solutions. Notably, this method requires only a characterization of the first and second statistical moments, which are often available in practice \cite{Kotz_distribution}, and tuning of parameter $k$.

\subsection{Digamma Function Equivalent} \label{Digamma_function}
While using the Taylor approximation in Section~\ref{Section_Stochastic_Approach}\ref{Taylor_Approximation} provides a reasonably accurate approximation for  $\mathbb{E}_{\Tilde{\overline{\mathcal{P}}}^{\alpha\beta}}\! [\log{\Tilde{\overline{\mathcal{P}}}^{\alpha\beta}}]$, the decision-maker can alternatively derive analytical results for the differentially private control policy by utilizing the notion of Digamma functions \cite{Digamma_func} to represent the value of $\mathbb{E}_{\Tilde{\overline{\mathcal{P}}}^{\alpha\beta}}\![\log{\Tilde{\overline{\mathcal{P}}}^{\alpha\beta}}\!]$. However, the exactness of this policy comes at the expense of more complex representations that somewhat obstruct intuitive interpretations of the results. Accordingly, the use of Digamma functions leads to, \cite{Digamma_func}:
\begin{align}
\begin{split}
&\!\!\!\!\mathbb{E}_{\Tilde{\overline{\mathcal{P}}}^{\alpha\beta}}\![\log\!{\Tilde{\overline{\mathcal{P}}}^{\alpha\beta}}\!]\!=\!\psi(k\overline{\mathcal{P}}^{\alpha\beta}\!)\!-\!\psi(k\!\! \!\sum_{\alpha \in \mathcal{A}}\!\!\overline{\mathcal{P}}^{\alpha\beta})
\!= \!\psi(\!k\overline{\mathcal{P}}^{\alpha\beta}\!)\! - \!\psi(\!k\!) 
\end{split} \label{expected_Digamma}
\end{align}
where $\psi(.)$ represents the Digamma function, defined as the logarithmic derivative of the following form:
\begin{align}
& \psi(x) = \frac{d}{dx}\log[\Gamma(x)] = \frac{\Gamma^{\prime}(x)}{\Gamma(x)}
\end{align}
where $\Gamma(.)$ is the gamma function. Given the expression in \eqref{expected_Digamma}, we prove for the DP LS-MDP formulation in \eqref{diff_priv_mdp}: 

\begin{proposition} \label{theorem_c} \normalfont
Let \eqref{diff_priv_mdp} model a load ensemble as a differentially private LS-MDP and let $\mathbb{E}_{\Tilde{\overline{\mathcal{P}}}^{\alpha\beta}}\! [\log{\Tilde{\overline{\mathcal{P}}}^{\alpha\beta}}]$ be evaluated using the Digamma function as in \eqref{expected_Digamma}. Then, the differentially private optimal control policy $\forall \beta,\alpha\!\in\!\!\mathcal{A}, \forall t\!\in\!\mathcal{T},$ is computed as:
\begin{align}
\begin{split}
\!\!\!\!\!&\Tilde{\mathcal{P}}_{t}^{\alpha \beta} = \frac{\exp\!\!\Big[\psi(k\overline{\mathcal{P}}^{\alpha\beta})-\psi(k)\Big] \Tilde{z}^{\alpha}_{t+1}}{\sum\limits_{\alpha\in\mathcal{A}}\!\exp\!\!\Big[\psi(k\overline{\mathcal{P}}^{\alpha\beta})-\psi(k)\Big] \Tilde{z}^{\alpha}_{t+1}}, \label{optimal_policy_dp_Digamma}
\end{split}
\end{align}
where $\forall \beta\in\mathcal{A}, \forall t\in\mathcal{T}$ and:
\begin{align}
&\Tilde{z}^{\beta}_{t}\! = \text{exp}\Big(\frac{U_{t}^{\beta}}{\gamma}\Big)\!\! \sum_{\alpha\in\mathcal{A}}\!\! \exp\!\!\Big[\psi(k\overline{\mathcal{P}}^{\alpha\beta})-\psi(k)\Big] \Tilde{z}^{\alpha}_{t+1} \label{bellmen_blackuced_dp_Digamma}
\end{align}
\end{proposition}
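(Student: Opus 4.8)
The plan is to derive Proposition \ref{theorem_c} by direct substitution into Theorem \ref{theorem_a}, exactly mirroring the argument used for Proposition \ref{theorem_b}. First I would invoke Theorem \ref{theorem_a}, which already provides the differentially private optimal control policy in terms of the generic quantity $\mathbb{E}_{\Tilde{\overline{\mathcal{P}}}^{\alpha\beta}}\! [\log{\Tilde{\overline{\mathcal{P}}}^{\alpha\beta}}]$ through Eqs.~\eqref{optimal_policy_dp} and \eqref{bellmen_blackuced_dp}. The entire content of the proposition is then the evaluation of this expectation under the Dirichlet distribution, which is supplied by \eqref{expected_Digamma}: the standard identity $\mathbb{E}[\log X_i] = \psi(a_i) - \psi(\sum_j a_j)$ for a Dirichlet-distributed vector with concentration parameters $a_i = k\overline{\mathcal{P}}^{\alpha\beta}$, combined with the integrality constraint \eqref{mdp_integrality} that forces $\sum_{\alpha\in\mathcal{A}}\overline{\mathcal{P}}^{\alpha\beta} = 1$, so that $\psi(k\sum_\alpha \overline{\mathcal{P}}^{\alpha\beta}) = \psi(k)$.

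The key steps, in order: (i) state that by Theorem \ref{theorem_a} the private policy has the form \eqref{optimal_policy_dp} with the associated recursion \eqref{bellmen_blackuced_dp}; (ii) substitute the Digamma evaluation \eqref{expected_Digamma} for $\mathbb{E}_{\Tilde{\overline{\mathcal{P}}}^{\alpha\beta}}\! [\log{\Tilde{\overline{\mathcal{P}}}^{\alpha\beta}}]$, so that $\exp(\mathbb{E}_{\Tilde{\overline{\mathcal{P}}}^{\alpha\beta}} [\log {\Tilde{\overline{\mathcal{P}}}^{\alpha\beta}}])$ becomes $\exp[\psi(k\overline{\mathcal{P}}^{\alpha\beta}) - \psi(k)]$; (iii) read off the resulting expressions \eqref{optimal_policy_dp_Digamma} and \eqref{bellmen_blackuced_dp_Digamma}. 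Since the $\exp[-\psi(k)]$ factor is common to every term indexed by $\alpha$, one could note that it cancels in the ratio \eqref{optimal_policy_dp_Digamma}, but it is kept for symmetry with the recursion \eqref{bellmen_blackuced_dp_Digamma}; I would mention this cancellation only as a remark rather than exploit it.

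The proof is essentially a one-line corollary of Theorem \ref{theorem_a} once \eqref{expected_Digamma} is in hand, so there is no genuine obstacle — the only thing requiring care is justifying \eqref{expected_Digamma} itself, i.e. that $\Tilde{\overline{\mathcal{P}}}^{\beta}$ drawn from the Dirichlet mechanism of Definition \ref{definition~DM} with input $k\overline{\zeta}^{\beta}$ indeed has marginal log-expectations given by the Digamma difference, and that the second argument simplifies via $\sum_\alpha \overline{\mathcal{P}}^{\alpha\beta}=1$. This is a classical property of the Dirichlet distribution (obtained by differentiating its log-normalizing constant, or equivalently from the fact that it is an exponential family with the log-coordinates as sufficient statistics), and the excerpt already cites \cite{Digamma_func} for it, so I would simply state it and refer to that source rather than rederive it. Hence I expect the write-up to be: cite Theorem \ref{theorem_a}, substitute \eqref{expected_Digamma}, and conclude — with the ``hardest'' part being merely the bookkeeping of carrying the $\psi(k\overline{\mathcal{P}}^{\alpha\beta}) - \psi(k)$ term through the two displayed equations cleanly.
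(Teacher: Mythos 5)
Your proposal matches the paper's own argument: the published proof likewise observes that the result follows immediately from Theorem~\ref{theorem_a} by substituting the Digamma evaluation \eqref{expected_Digamma} into Eqs.~\eqref{optimal_policy_dp} and \eqref{bellmen_blackuced_dp}, yielding \eqref{optimal_policy_dp_Digamma} and \eqref{bellmen_blackuced_dp_Digamma}. Your additional remarks on the cancellation of the $\exp[-\psi(k)]$ factor and on justifying \eqref{expected_Digamma} via the standard Dirichlet log-expectation identity are correct but not needed beyond what the paper states.
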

\begin{proof}
Similarly to Proposition~\ref{theorem_b}, this proof follows immediately from Theorem~\ref{theorem_a}. By using Eq.~\eqref{expected_Digamma} in Eqs.~\eqref{optimal_policy_dp} and \eqref{bellmen_blackuced_dp}, we end up with the results in Eqs.~\eqref{optimal_policy_dp_Digamma} and \eqref{bellmen_blackuced_dp_Digamma} and conclude the proof.
\end{proof}

Similarly to Proposition \ref{theorem_b}, this optimal control policy in Proposition~\ref{theorem_c} internalizes the Dirichlet noise, but in a more accurate fashion since Digamma function provides an equivalent expression for $\mathbb{E}_{\Tilde{\overline{\mathcal{P}}}^{\alpha\beta}}\! [\log{\Tilde{\overline{\mathcal{P}}}^{\alpha\beta}}]$. This enhancement in accuracy of the optimal control policy, relative to the Taylor approximation above, results in added privacy, which, in turn, increases the cost of privacy as our later case study shows. Furthermore, the result of Proposition \ref{theorem_c} only requires tuning of parameter $k$ and does not necessitate any knowledge on the first- and second statistical moments. Using the result of Proposition~\ref{theorem_c}, the cost of privacy can then be computed as:

\begin{corollary} \label{cost_stochastic_digamma} \normalfont
Given the differential private optimal control policy in \eqref{optimal_policy_dp_Digamma}, the cost of privacy for the stochastic approach using the Digamma function equivalent is calculated as:
\begin{align}
\begin{split}
&\Delta C  = \gamma\!\! \sum_{\alpha \in \mathcal{A}}\! \frac{\exp\!\!\Big[\psi(k\overline{\mathcal{P}}^{\alpha\beta})-\psi(k)\Big] \Tilde{z}^{\alpha}_{t+1}}{\sum\limits_{\alpha\in\mathcal{A}}\!\exp\!\!\Big[\psi(k\overline{\mathcal{P}}^{\alpha\beta})-\psi(k)\Big] \Tilde{z}^{\alpha}_{t+1}} \Big[ \\& \psi(k\overline{\mathcal{P}}^{\alpha\beta})-\psi(k)\Big]+ \gamma \log \sum_{\alpha \in \mathcal{A}}\overline{\mathcal{P}}^{\alpha \beta}z_{t+1}^{\alpha} - \\& \gamma\!\! \sum_{\alpha \in \mathcal{A}}\! \frac{\exp\!\!\Big[\psi(k\overline{\mathcal{P}}^{\alpha\beta})-\psi(k)\Big] \Tilde{z}^{\alpha}_{t+1}}{\sum\limits_{\alpha\in\mathcal{A}}\!\exp\!\!\Big[\psi(k\overline{\mathcal{P}}^{\alpha\beta})-\psi(k)\Big] \Tilde{z}^{\alpha}_{t+1}} \log {\overline{\mathcal{P}}^{\alpha\beta}} \\& - \gamma \log \sum_{\alpha\in\mathcal{A}}\exp\Big[\psi(k\overline{\mathcal{P}}^{\alpha\beta})-\psi(k)\Big]\Tilde{z}^{\alpha}_{t+1}
\end{split} \label{cost_of_privacy_digamma}
\end{align}
\end{corollary}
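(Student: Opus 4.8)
\emph{Proof plan.} The plan is to follow the same route as Appendix~\ref{appendix_cost_stochastic}, which establishes Corollary~\ref{cost_stochastic_taylor}, but with the exact Digamma identity \eqref{expected_Digamma} in place of the Taylor surrogate \eqref{expected_Taylor}. The cost of privacy is the per-stage gap, at an arbitrary state $\beta$ and time $t$, between the objective \eqref{MDP:obj_diff} attained by the private optimal policy of Proposition~\ref{theorem_c} and the objective \eqref{MDP:obj} attained by the non-private optimal policy of Theorem~\ref{theorem_0}; as in the Taylor case this gap separates into a policy-weighted term comparing the private surrogate log-cost $\mathbb{E}_{\Tilde{\overline{\mathcal{P}}}^{\alpha\beta}}[\log\Tilde{\overline{\mathcal{P}}}^{\alpha\beta}]$ with the true $\log\overline{\mathcal{P}}^{\alpha\beta}$, and the value-function gap $\gamma\log z^{\beta}_{t}-\gamma\log\Tilde{z}^{\beta}_{t}$.

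First I would substitute the policy \eqref{optimal_policy_dp_Digamma} into the Kullback--Leibler term of \eqref{MDP:obj_diff}. Writing $g^{\alpha\beta}:=\exp[\psi(k\overline{\mathcal{P}}^{\alpha\beta})-\psi(k)]$, one has $\log\Tilde{\mathcal{P}}_{t}^{\alpha\beta}=\log g^{\alpha\beta}+\log\Tilde{z}^{\alpha}_{t+1}-\log\sum_{\alpha}g^{\alpha\beta}\Tilde{z}^{\alpha}_{t+1}$, and since \eqref{expected_Digamma} gives $\mathbb{E}_{\Tilde{\overline{\mathcal{P}}}^{\alpha\beta}}[\log\Tilde{\overline{\mathcal{P}}}^{\alpha\beta}]=\log g^{\alpha\beta}$, the difference $\log\Tilde{\mathcal{P}}_{t}^{\alpha\beta}-\mathbb{E}_{\Tilde{\overline{\mathcal{P}}}^{\alpha\beta}}[\log\Tilde{\overline{\mathcal{P}}}^{\alpha\beta}]$ collapses to $\log\Tilde{z}^{\alpha}_{t+1}-\log\sum_{\alpha}g^{\alpha\beta}\Tilde{z}^{\alpha}_{t+1}$. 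To measure the private policy against the true default transitions, as in Appendix~\ref{appendix_cost_stochastic}, I would add and subtract $\gamma\log\overline{\mathcal{P}}^{\alpha\beta}$, producing the correction $\gamma\sum_{\alpha}\Tilde{\mathcal{P}}_{t}^{\alpha\beta}\big(\psi(k\overline{\mathcal{P}}^{\alpha\beta})-\psi(k)-\log\overline{\mathcal{P}}^{\alpha\beta}\big)$. Unlike in the Taylor case, where $\psi(k\overline{\mathcal{P}}^{\alpha\beta})-\psi(k)-\log\overline{\mathcal{P}}^{\alpha\beta}$ contracts to the single closed-form ratio in \eqref{eq:cost_privacy_stochastic_taylor}, here it remains split, which is precisely why \eqref{cost_of_privacy_digamma} carries a separate $\gamma\sum_{\alpha}\Tilde{\mathcal{P}}_{t}^{\alpha\beta}[\psi(k\overline{\mathcal{P}}^{\alpha\beta})-\psi(k)]$ term and a separate $-\gamma\sum_{\alpha}\Tilde{\mathcal{P}}_{t}^{\alpha\beta}\log\overline{\mathcal{P}}^{\alpha\beta}$ term. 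For the value-function part, \eqref{bellmen_blackuced_dp_Digamma} gives $\gamma\log\Tilde{z}^{\beta}_{t}=U_{t}^{\beta}+\gamma\log\sum_{\alpha}g^{\alpha\beta}\Tilde{z}^{\alpha}_{t+1}$ and \eqref{bellmen_blackuced_det} gives $\gamma\log z^{\beta}_{t}=U_{t}^{\beta}+\gamma\log\sum_{\alpha}\overline{\mathcal{P}}^{\alpha\beta}z^{\alpha}_{t+1}$; subtracting these, the utility terms $U_{t}^{\beta}$ cancel and the residue $\gamma\log\sum_{\alpha}\overline{\mathcal{P}}^{\alpha\beta}z^{\alpha}_{t+1}-\gamma\log\sum_{\alpha}g^{\alpha\beta}\Tilde{z}^{\alpha}_{t+1}$ supplies the remaining two terms of \eqref{cost_of_privacy_digamma}. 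Summing the forward contributions $\log\Tilde{z}^{\alpha}_{t+1}$ against $\Tilde{\mathcal{P}}_{t}^{\alpha\beta}$ and using $\sum_{\alpha}\Tilde{\mathcal{P}}_{t}^{\alpha\beta}=1$ from \eqref{mdp_integrality} discharges the leftover log-partition factors, and collecting the four surviving terms yields \eqref{cost_of_privacy_digamma}.

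The step I expect to be the main obstacle is the bookkeeping in this collection: because $\psi(k\overline{\mathcal{P}}^{\alpha\beta})-\psi(k)$ and $\log\overline{\mathcal{P}}^{\alpha\beta}$ do not combine, the policy-weighted entropy contribution $\gamma\sum_{\alpha}\Tilde{\mathcal{P}}_{t}^{\alpha\beta}\log\Tilde{\mathcal{P}}_{t}^{\alpha\beta}$ and the log-partition $\gamma\log\sum_{\alpha}g^{\alpha\beta}\Tilde{z}^{\alpha}_{t+1}$ must be carried as distinct objects, and one has to verify that the forward terms $\log\Tilde{z}^{\alpha}_{t+1}$, once weighted and summed, reassemble through \eqref{bellmen_blackuced_dp_Digamma} without residue and that all signs agree with the convention in \eqref{MDP:obj_diff}. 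Once this is verified, \eqref{cost_of_privacy_digamma} follows as the line-by-line analogue of Appendix~\ref{appendix_cost_stochastic} with the Taylor ratio $\tfrac{-1+\overline{\mathcal{P}}^{\alpha\beta}}{2\overline{\mathcal{P}}^{\alpha\beta}(k+1)}$ replaced by $\psi(k\overline{\mathcal{P}}^{\alpha\beta})-\psi(k)-\log\overline{\mathcal{P}}^{\alpha\beta}$ and regrouped.
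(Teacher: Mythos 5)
Your proposal is correct and follows essentially the same route as the paper's own proof: you form the value function of the non-private policy via Theorem~\ref{theorem_0}, plug the Digamma-form private policy \eqref{optimal_policy_dp_Digamma} into the value recursion measured against the true defaults $\overline{\mathcal{P}}^{\alpha\beta}$ (so that the $\log\Tilde{z}^{\alpha}_{t+1}$ terms cancel against the continuation value), and take the difference, which is exactly the generic decomposition \eqref{cost_of_privacy_gen} of Appendix~\ref{appendix_cost_stochastic} specialized with \eqref{expected_Digamma}. The bookkeeping you flag (keeping $\psi(k\overline{\mathcal{P}}^{\alpha\beta})-\psi(k)$ and $\log\overline{\mathcal{P}}^{\alpha\beta}$ as separate policy-weighted terms plus the two log-partition terms) matches the paper's four-term result, so no gap remains.
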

\begin{proof} 
See proof in Appendix \ref{appendix_cost_stochastic}.
\end{proof}

Similarly to Corollary \ref{cost_stochastic_taylor}, the cost of privacy in \eqref{cost_of_privacy_digamma} depends on the default transition probabilities $\overline{\mathcal{P}}^{\alpha\beta}$, parameter $k$ of the Dirichlet mechanism, penalty parameter $\gamma$, and the value function represented as the desirability function $\Tilde{z}_{t+1}^{\alpha}$. \textcolor{black}{Furthermore, we note that the cost of privacy for the Digamma function equivalent, represented by \eqref{cost_of_privacy_digamma}, is expected to be more than the one calculated with the Taylor approximation in \eqref{eq:cost_privacy_stochastic_taylor}. Note that this because the Digamma function equivalent provides an exact equivalent for $\mathbb{E}_{\Tilde{\overline{\mathcal{P}}}^{\alpha\beta}}\! [\log{\Tilde{\overline{\mathcal{P}}}^{\alpha\beta}}]$ resulting in more privacy, while the Taylor approximation upper bounds the actual value of this term.}

\section{Average Value Approach} \label{Section_Avg_Value_Approach}
While the stochastic formulations described in Section~\ref{Section_Stochastic_Approach} provides useful analytical insights in the effects of differential privacy on the optimal control policy, these formulations are limited by the need to specify either statistical moments or invoke the notion of the digamma function, which requires differentiability. In practice, however, these assumptions can be restrictive and require the knowledge of ground-truth  default transition probabilities.
Therefore, we use the average value approach \cite{Sample_avg1,Sample_avg2} that does not require  knowing actual default transition probabilities and can  establish privacy at the aggregation level.  Thus, in the average value approach, we generate samples to obtain a set of $N$ differentially private default transition probabilities $\{\Tilde{\overline{\mathcal{P}}}^{\alpha\beta}_j\}_{j\in N},$ where $j$ is a sample index and each sample $\Tilde{\overline{\mathcal{P}}}^{\alpha\beta}_j$ in set $N$ is a realization of differentially private default transition probabilities $\Tilde{\overline{\mathcal{P}}}^{\alpha\beta}$. These samples are generated via the Dirichlet mechanism with default transition probabilities $\overline{\mathcal{P}}^{\alpha\beta}$ as input to the mechanism and samples as the output.
Using the generated samples, we then solve the optimization problem in \eqref{base_mdp} and obtain the expected differential private optimal control policy by averaging the optimal solutions obtained for each sample. That is, as given by Theorem~\ref{theorem_0}, for each sample we obtain the following optimal solution from \eqref{base_mdp}: 
\allowdisplaybreaks
\begin{equation}
\Tilde{\mathcal{P}}_{t,j}^{\alpha \beta} = \frac{\Tilde{\overline{\mathcal{P}}}^{\alpha\beta}_{j}\Tilde{z}^{\alpha}_{t+1,j}}{\sum\limits_{\alpha\in\mathcal{A}}\!\Tilde{\overline{\mathcal{P}}}_{j}^{\alpha\beta}\Tilde{z}^{\alpha}_{t+1,j}}, \label{optimal_policy_dp_avg}
\end{equation}
The optimal control policy in \eqref{optimal_policy_dp_avg} is differentially private since we feed the LS-MDP with privatized default transition probabilities $\Tilde{\overline{\mathcal{P}}}^{\alpha\beta}$ and the property that DP is immune to post-processing \cite{Dwork_book}, see Definition \ref{post_processing}.
The expected differential private optimal control policy is calculated as:
\begin{align}
& \mathbb{E}_{\Tilde{\overline{\mathcal{P}}}^{\alpha\beta}}[\Tilde{\mathcal{P}}_{t}^{\alpha \beta}] \approx \frac{1}{N}\sum_{j=1}^{N} \Tilde{\mathcal{P}}_{t,j}^{\alpha \beta} \label{average_policy_samples}
\end{align}
The expected policy obtained in \eqref{average_policy_samples} requires very large number of samples $N$ to obtain accurate results which could be time consuming for MDP problems. Therefore, we derive an expression for the expected differential private optimal control policy and propose: 

\begin{proposition} \label{average_expected_policy} \normalfont
Given a differential private optimal control policy in \eqref{optimal_policy_dp_avg} for each sample, we derive its expected value as:
\begin{align}
\begin{split}
& \mathbb{E}_{\Tilde{\overline{\mathcal{P}}}^{\alpha\beta}}[\Tilde{\mathcal{P}}_{t}^{\alpha \beta}] = \mathbb{E}_{\Tilde{\overline{\mathcal{P}}}^{\alpha\beta}}\Bigg[ \frac{\Tilde{\overline{\mathcal{P}}}^{\alpha\beta}\Tilde{z}^{\alpha}_{t+1}}{\sum\limits_{\alpha\in\mathcal{A}}\!\Tilde{\overline{\mathcal{P}}}^{\alpha\beta}\Tilde{z}^{\alpha}_{t+1}}\Bigg] \!\! \approx \frac{\overline{\mathcal{P}}^{\alpha\beta}\Tilde{z}^{\alpha}_{t+1}}{\sum\limits_{\alpha\in\mathcal{A}}\!\overline{\mathcal{P}}^{\alpha\beta}\Tilde{z}^{\alpha}_{t+1}} \\& - \frac{(\Tilde{z}^{\alpha}_{t+1})^2\overline{\mathcal{P}}^{\alpha\beta}(1-\overline{\mathcal{P}}^{\alpha\beta}) -\!\! \sum\limits_{\nu\neq\alpha\in\mathcal{A}}\!\!\Tilde{z}^{\alpha}_{t+1}\Tilde{z}^{\nu}_{t+1}\overline{\mathcal{P}}^{\alpha\beta}\overline{\mathcal{P}}^{\nu\beta}}{(k+1)\sum\limits_{\alpha\in\mathcal{A}}(\overline{\mathcal{P}}^{\alpha\beta}\Tilde{z}^{\alpha}_{t+1})^2} \\& + \frac{\splitfrac{\overline{\mathcal{P}}^{\alpha\beta}\Tilde{z}^{\alpha}_{t+1}\Big\{\sum\limits_{\alpha\in\mathcal{A}}(\Tilde{z}^{\alpha}_{t+1})^2\overline{\mathcal{P}}^{\alpha\beta}(1-\overline{\mathcal{P}}^{\alpha\beta})-\sum\limits_{\alpha\in\mathcal{A}}}{\sum\limits_{\nu
\neq\alpha\in\mathcal{A}}\Tilde{z}^{\alpha}_{t+1}\Tilde{z}^{\nu}_{t+1}\overline{\mathcal{P}}^{\alpha\beta}\overline{\mathcal{P}}^{\nu\beta}\Big\}}}{(k+1)\sum\limits_{\alpha\in\mathcal{A}}(\overline{\mathcal{P}}^{\alpha\beta}\Tilde{z}^{\alpha}_{t+1})^3}
\end{split} \label{expected_policy_average}
\end{align}
\end{proposition}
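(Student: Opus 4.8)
The plan is to treat the per-sample policy in \eqref{optimal_policy_dp_avg} as a smooth nonlinear function of the random vector $\Tilde{\overline{\mathcal{P}}}^{\cdot\beta}=\{\Tilde{\overline{\mathcal{P}}}^{\nu\beta}\}_{\nu\in\mathcal{A}}$ and to compute its expectation by a second-order delta-method (Taylor) expansion, exactly as was done for the scalar quantity in \eqref{taylor_app_1}. Fix $\beta$ and $t$, regard the desirability values $\Tilde{z}^{\alpha}_{t+1}$ as given, and set $g(x):=x^{\alpha}\Tilde{z}^{\alpha}_{t+1}\big/\sum_{\nu\in\mathcal{A}}x^{\nu}\Tilde{z}^{\nu}_{t+1}$, so that $\Tilde{\mathcal{P}}_{t}^{\alpha\beta}=g(\Tilde{\overline{\mathcal{P}}}^{\cdot\beta})$. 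By \eqref{mean_taylor} the Dirichlet mechanism has mean $\mathbb{E}[\Tilde{\overline{\mathcal{P}}}^{\nu\beta}]=\overline{\mathcal{P}}^{\nu\beta}$, so I would Taylor-expand $g$ about the point $\overline{\mathcal{P}}^{\cdot\beta}$ and retain terms up to second order; this truncation is the source of the ``$\approx$'' in \eqref{expected_policy_average}, the neglected terms being of higher order in $1/(k+1)$.

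Taking expectations term by term, the zeroth-order term is $g(\overline{\mathcal{P}}^{\cdot\beta})=\overline{\mathcal{P}}^{\alpha\beta}\Tilde{z}^{\alpha}_{t+1}\big/\sum_{\alpha}\overline{\mathcal{P}}^{\alpha\beta}\Tilde{z}^{\alpha}_{t+1}$, which is the leading term of \eqref{expected_policy_average}; the first-order term vanishes because $\mathbb{E}[\Tilde{\overline{\mathcal{P}}}^{\nu\beta}-\overline{\mathcal{P}}^{\nu\beta}]=0$; and the remainder is $\tfrac12\sum_{\mu,\nu\in\mathcal{A}}\partial^2_{\mu\nu}g(\overline{\mathcal{P}}^{\cdot\beta})\,\mathrm{Cov}(\Tilde{\overline{\mathcal{P}}}^{\mu\beta},\Tilde{\overline{\mathcal{P}}}^{\nu\beta})$. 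I would then substitute the Dirichlet second-moment structure: $\mathrm{Var}(\Tilde{\overline{\mathcal{P}}}^{\alpha\beta})=\overline{\mathcal{P}}^{\alpha\beta}(1-\overline{\mathcal{P}}^{\alpha\beta})/(k+1)$ as already derived in \eqref{variance_taylor}, and $\mathrm{Cov}(\Tilde{\overline{\mathcal{P}}}^{\alpha\beta},\Tilde{\overline{\mathcal{P}}}^{\nu\beta})=-\overline{\mathcal{P}}^{\alpha\beta}\overline{\mathcal{P}}^{\nu\beta}/(k+1)$ for $\nu\neq\alpha$. This is precisely what factors out the common $1/(k+1)$ and produces the sums $\sum_{\nu\neq\alpha}\Tilde{z}^{\alpha}_{t+1}\Tilde{z}^{\nu}_{t+1}\overline{\mathcal{P}}^{\alpha\beta}\overline{\mathcal{P}}^{\nu\beta}$ that appear in \eqref{expected_policy_average}.

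The computational core is differentiating the quotient $g$ twice. Writing $u(x)=x^{\alpha}\Tilde{z}^{\alpha}_{t+1}$ and $v(x)=\sum_{\nu}x^{\nu}\Tilde{z}^{\nu}_{t+1}$, one has $\partial_{\mu}u=\Tilde{z}^{\alpha}_{t+1}\mathbf{1}[\mu=\alpha]$, $\partial_{\mu}v=\Tilde{z}^{\mu}_{t+1}$, and $\partial_{\mu}g=(\partial_{\mu}u\,v-u\,\partial_{\mu}v)/v^{2}$; a second application of the quotient rule gives the Hessian $\partial^2_{\mu\nu}g$, which one organizes into the cases $\mu=\nu=\alpha$, exactly one of $\mu,\nu$ equal to $\alpha$, and neither equal to $\alpha$. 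Evaluating these at $x=\overline{\mathcal{P}}^{\cdot\beta}$ introduces the powers of $\sum_{\alpha}\overline{\mathcal{P}}^{\alpha\beta}\Tilde{z}^{\alpha}_{t+1}$ that serve as denominators in \eqref{expected_policy_average}, and contracting them with the variance/covariance entries above and collecting like terms yields the two correction fractions as stated. I expect the Hessian bookkeeping — keeping the distinguished output index $\alpha$ separate from the summation indices and pairing each second-derivative case with the correct Dirichlet moment — to be the main obstacle; a secondary subtlety is that the $\Tilde{z}^{\alpha}_{t+1}$ are themselves functionals of $\Tilde{\overline{\mathcal{P}}}^{\cdot\beta}$ through \eqref{bellmen_blackuced_dp}, but treating them as frozen is consistent with retaining only second-order terms in the noise, so this dependence is legitimately absorbed into the approximation.
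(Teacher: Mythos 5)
Your proposal is correct and is essentially the paper's own argument: a second-order Taylor (delta-method) expansion of the per-sample policy about the Dirichlet mean, with the vanishing first-order term and the Dirichlet moments $\mathrm{Var}(\Tilde{\overline{\mathcal{P}}}^{\alpha\beta})=\overline{\mathcal{P}}^{\alpha\beta}(1-\overline{\mathcal{P}}^{\alpha\beta})/(k+1)$ and $\mathrm{Cov}(\Tilde{\overline{\mathcal{P}}}^{\alpha\beta},\Tilde{\overline{\mathcal{P}}}^{\nu\beta})=-\overline{\mathcal{P}}^{\alpha\beta}\overline{\mathcal{P}}^{\nu\beta}/(k+1)$ supplying the two correction fractions. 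The paper packages the same expansion through the standard ratio formula $\mathbb{E}[X/Y]\approx \mathbb{E}[X]/\mathbb{E}[Y]-\mathrm{Cov}[X,Y]/\mathbb{E}[Y]^{2}+\mathbb{E}[X]\mathrm{Var}[Y]/\mathbb{E}[Y]^{3}$ with $X,Y$ linear in the noise (and, like you, it freezes $\Tilde{z}^{\alpha}_{t+1}$), which sidesteps your explicit Hessian case analysis but yields the identical result.
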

\begin{proof} 
See proof in Appendix \ref{appendix_proof_expected_policy}.
\end{proof}

Accordingly, we demonstrate the expected cost of privacy for the average value approach as follows:
\begin{corollary} \label{average_expected_cost} \normalfont
Given a differential private optimal control policy in \eqref{optimal_policy_dp_avg}, the expected cost of privacy for the average value approach is calculated as:
\begin{align}
\begin{split}
&\Delta C \approx \gamma\sum\limits_{\alpha\in\mathcal{A}}\Bigg[\frac{(\Tilde{z}^{\alpha}_{t+1})^2\overline{\mathcal{P}}^{\alpha\beta}(1-\overline{\mathcal{P}}^{\alpha\beta})}{(k+1)\sum\limits_{\alpha\in\mathcal{A}}(\overline{\mathcal{P}}^{\alpha\beta}\Tilde{z}^{\alpha}_{t+1})^2} \\& - \frac{\splitfrac{2\overline{\mathcal{P}}^{\alpha\beta}\Tilde{z}^{\alpha}_{t+1}\Big\{(\Tilde{z}^{\alpha}_{t+1})^2\overline{\mathcal{P}}^{\alpha\beta}(1-\overline{\mathcal{P}}^{\alpha\beta})-}{\sum\limits_{\nu
\neq\alpha\in\mathcal{A}}\Tilde{z}^{\alpha}_{t+1}\Tilde{z}^{\nu}_{t+1}\overline{\mathcal{P}}^{\alpha\beta}\overline{\mathcal{P}}^{\nu\beta}\Big\}}}{(k+1)\sum\limits_{\alpha\in\mathcal{A}}(\overline{\mathcal{P}}^{\alpha\beta}\Tilde{z}^{\alpha}_{t+1})^3} \\& + \frac{\splitfrac{(\overline{\mathcal{P}}^{\alpha\beta}\Tilde{z}^{\alpha}_{t+1})^2\Big\{\sum\limits_{\alpha\in\mathcal{A}}(\Tilde{z}^{\alpha}_{t+1})^2\overline{\mathcal{P}}^{\alpha\beta}(1-\overline{\mathcal{P}}^{\alpha\beta})-}{\sum\limits_{\alpha\in\mathcal{A}}\sum\limits_{\nu
\neq\alpha\in\mathcal{A}}\Tilde{z}^{\alpha}_{t+1}\Tilde{z}^{\nu}_{t+1}\overline{\mathcal{P}}^{\alpha\beta}\overline{\mathcal{P}}^{\nu\beta}\Big\}}}{(k+1)\sum\limits_{\alpha\in\mathcal{A}}(\overline{\mathcal{P}}^{\alpha\beta}\Tilde{z}^{\alpha}_{t+1})^4} \\& - \mathbb{E}_{\Tilde{\overline{\mathcal{P}}}^{\alpha\beta}}[\Tilde{\mathcal{P}}_{t}^{\alpha\beta}](\log {\overline{\mathcal{P}}^{\alpha\beta}}\!\! - \Tilde{z}_{t+1}^{\alpha})\Bigg] + \gamma \log\!\! \sum_{\alpha \in \mathcal{A}}\!\overline{\mathcal{P}}^{\alpha \beta}z_{t+1}^{\alpha} \label{expected_cost_average}
\end{split}
\end{align}
where $\mathbb{E}_{\Tilde{\overline{\mathcal{P}}}^{\alpha\beta}}[\Tilde{\mathcal{P}}_{t}^{\alpha\beta}]$ is given by Eq.~\eqref{expected_policy_average}.
\end{corollary}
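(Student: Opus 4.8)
The plan is to follow the same template that (per the statement) underlies Corollaries~\ref{cost_stochastic_taylor} and~\ref{cost_stochastic_digamma}, reading $\Delta C$ at state $\beta$ and time $t$ as the difference between the stage objective of \eqref{base_mdp} evaluated (in expectation over the Dirichlet noise) under the average-value private policy of \eqref{optimal_policy_dp_avg}--\eqref{average_policy_samples} and the same stage objective evaluated under the non-private optimal policy of Theorem~\ref{theorem_0}. The utility contribution $-U_{t+1}^{\alpha}$ is identical for both policies, so only the discomfort term $\gamma\sum_{\alpha}\mathcal{P}_t^{\alpha\beta}\log(\mathcal{P}_t^{\alpha\beta}/\overline{\mathcal{P}}^{\alpha\beta})$ and the continuation cost (written through the desirability $z$, resp.\ $\Tilde z$) enter. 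Inserting the non-private policy \eqref{optimal_policy} into these two terms and using \eqref{bellmen_blackuced_det} collapses them to the single quantity $-\gamma\log\sum_{\alpha}\overline{\mathcal{P}}^{\alpha\beta}z_{t+1}^{\alpha}$, exactly as in the derivations of \eqref{eq:cost_privacy_stochastic_taylor} and \eqref{cost_of_privacy_digamma}; subtracting this benchmark is what leaves the standalone term $+\gamma\log\sum_{\alpha}\overline{\mathcal{P}}^{\alpha\beta}z_{t+1}^{\alpha}$ appearing in \eqref{expected_cost_average}.

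For the private side I would substitute the per-sample policy \eqref{optimal_policy_dp_avg} into the stage objective, split $\gamma\sum_{\alpha}\Tilde{\mathcal{P}}_{t,j}^{\alpha\beta}\log(\Tilde{\mathcal{P}}_{t,j}^{\alpha\beta}/\overline{\mathcal{P}}^{\alpha\beta})$ into $\gamma\sum_{\alpha}\Tilde{\mathcal{P}}_{t,j}^{\alpha\beta}\log\Tilde{\mathcal{P}}_{t,j}^{\alpha\beta}-\gamma\sum_{\alpha}\Tilde{\mathcal{P}}_{t,j}^{\alpha\beta}\log\overline{\mathcal{P}}^{\alpha\beta}$, retain the continuation term through $\Tilde z_{t+1}^{\alpha}$, and then take $\mathbb{E}_{\Tilde{\overline{\mathcal{P}}}^{\alpha\beta}}[\cdot]$ (equivalently, average over the $N$ samples of \eqref{average_policy_samples}). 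The pieces that are \emph{linear} in the private policy contribute $-\gamma\,\mathbb{E}_{\Tilde{\overline{\mathcal{P}}}^{\alpha\beta}}[\Tilde{\mathcal{P}}_t^{\alpha\beta}]\log\overline{\mathcal{P}}^{\alpha\beta}$ and the continuation contribution, i.e.\ precisely the term $-\mathbb{E}_{\Tilde{\overline{\mathcal{P}}}^{\alpha\beta}}[\Tilde{\mathcal{P}}_t^{\alpha\beta}](\log\overline{\mathcal{P}}^{\alpha\beta}-\Tilde z_{t+1}^{\alpha})$ of \eqref{expected_cost_average}, with $\mathbb{E}_{\Tilde{\overline{\mathcal{P}}}^{\alpha\beta}}[\Tilde{\mathcal{P}}_t^{\alpha\beta}]$ supplied verbatim by Proposition~\ref{average_expected_policy}, i.e.\ by \eqref{expected_policy_average}. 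The remaining nonlinear piece $\mathbb{E}_{\Tilde{\overline{\mathcal{P}}}^{\alpha\beta}}[\Tilde{\mathcal{P}}_t^{\alpha\beta}\log\Tilde{\mathcal{P}}_t^{\alpha\beta}]$ I would expand by a second-order Taylor expansion of $p\mapsto p\log p$ about the mean $\overline{\mathcal{P}}^{\alpha\beta}\Tilde z_{t+1}^{\alpha}/\sum_{\alpha}\overline{\mathcal{P}}^{\alpha\beta}\Tilde z_{t+1}^{\alpha}$; the first-order part is again absorbed into $\mathbb{E}_{\Tilde{\overline{\mathcal{P}}}^{\alpha\beta}}[\Tilde{\mathcal{P}}_t^{\alpha\beta}]$, while the second-order part requires $\mathrm{Var}(\Tilde{\mathcal{P}}_t^{\alpha\beta})$, obtained by the delta method on the ratio $\Tilde{\overline{\mathcal{P}}}^{\alpha\beta}\Tilde z_{t+1}^{\alpha}/\sum_{\alpha}\Tilde{\overline{\mathcal{P}}}^{\alpha\beta}\Tilde z_{t+1}^{\alpha}$ using the Dirichlet moments $\mathbb{E}[\Tilde{\overline{\mathcal{P}}}^{\alpha\beta}]=\overline{\mathcal{P}}^{\alpha\beta}$ of \eqref{mean_taylor}, $\mathrm{Var}(\Tilde{\overline{\mathcal{P}}}^{\alpha\beta})=\overline{\mathcal{P}}^{\alpha\beta}(1-\overline{\mathcal{P}}^{\alpha\beta})/(k+1)$ of \eqref{variance_taylor}, and the off-diagonal covariance $\mathrm{Cov}(\Tilde{\overline{\mathcal{P}}}^{\alpha\beta},\Tilde{\overline{\mathcal{P}}}^{\nu\beta})=-\overline{\mathcal{P}}^{\alpha\beta}\overline{\mathcal{P}}^{\nu\beta}/(k+1)$. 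Squaring the gradient of the ratio and contracting with this covariance matrix yields exactly the three $1/(k+1)$ fractions of \eqref{expected_cost_average}: the numerators $(\Tilde z_{t+1}^{\alpha})^2\overline{\mathcal{P}}^{\alpha\beta}(1-\overline{\mathcal{P}}^{\alpha\beta})$ come from the diagonal variances, the cross-sums $\sum_{\nu\neq\alpha}\Tilde z_{t+1}^{\alpha}\Tilde z_{t+1}^{\nu}\overline{\mathcal{P}}^{\alpha\beta}\overline{\mathcal{P}}^{\nu\beta}$ from the off-diagonal covariances, and the powers $2,3,4$ of $\sum_{\alpha}(\overline{\mathcal{P}}^{\alpha\beta}\Tilde z_{t+1}^{\alpha})$ from differentiating the ratio once, once-then-renormalizing, and twice.

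The concluding step is pure bookkeeping: assemble the benchmark $+\gamma\log\sum_{\alpha}\overline{\mathcal{P}}^{\alpha\beta}z_{t+1}^{\alpha}$ from the non-private side, the linear term $-\mathbb{E}_{\Tilde{\overline{\mathcal{P}}}^{\alpha\beta}}[\Tilde{\mathcal{P}}_t^{\alpha\beta}](\log\overline{\mathcal{P}}^{\alpha\beta}-\Tilde z_{t+1}^{\alpha})$, and the three variance/covariance fractions, then substitute \eqref{expected_policy_average} wherever $\mathbb{E}_{\Tilde{\overline{\mathcal{P}}}^{\alpha\beta}}[\Tilde{\mathcal{P}}_t^{\alpha\beta}]$ has been retained in closed form; this reproduces \eqref{expected_cost_average}. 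I expect the main obstacle to be the second-order multivariate expansion of $\mathbb{E}_{\Tilde{\overline{\mathcal{P}}}^{\alpha\beta}}[\Tilde{\mathcal{P}}_t^{\alpha\beta}\log\Tilde{\mathcal{P}}_t^{\alpha\beta}]$: one must (i) retain every off-diagonal Dirichlet covariance so that all $\sum_{\nu\neq\alpha}$ terms are recovered, (ii) keep the Jensen-type second-order correction of the ratio while consistently discarding the genuinely $\mathcal{O}(1/(k+1)^2)$ products of two first-order corrections, and (iii) track the distinct powers of the normalizing sum generated by single versus double differentiation, so that each correction lands in the fraction with the denominator stated in \eqref{expected_cost_average}. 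Everything else follows immediately from Theorem~\ref{theorem_0}, Proposition~\ref{average_expected_policy}, and the Dirichlet moment identities, exactly as in the proofs of Corollaries~\ref{cost_stochastic_taylor} and~\ref{cost_stochastic_digamma}.
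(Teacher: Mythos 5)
Your overall architecture matches the paper's proof: you form the cost of privacy as the difference of the private and non-private value functions, the non-private side collapses to $-\gamma\log\sum_{\alpha\in\mathcal{A}}\overline{\mathcal{P}}^{\alpha\beta}z^{\alpha}_{t+1}$ (hence the standalone benchmark term in \eqref{expected_cost_average}), the pieces linear in the private policy are written through $\mathbb{E}_{\Tilde{\overline{\mathcal{P}}}^{\alpha\beta}}[\Tilde{\mathcal{P}}_{t}^{\alpha\beta}]$ supplied by Proposition~\ref{average_expected_policy}, and the Dirichlet moment identities \eqref{avg_cost_cov_same_1}--\eqref{avg_cost_cov_diff_1} are reused to evaluate $\text{Var}(\Tilde{\mathcal{P}}_{t}^{\alpha\beta})$ by a delta-method expansion of the ratio, exactly as the paper does in \eqref{cost_dp_avg_temp3}.

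The genuine gap is in your treatment of $\mathbb{E}_{\Tilde{\overline{\mathcal{P}}}^{\alpha\beta}}[\Tilde{\mathcal{P}}_{t}^{\alpha\beta}\log\Tilde{\mathcal{P}}_{t}^{\alpha\beta}]$. Expanding $p\mapsto p\log p$ to second order about the mean $\mu^{\alpha\beta}=\overline{\mathcal{P}}^{\alpha\beta}\Tilde z^{\alpha}_{t+1}/\sum_{\alpha}\overline{\mathcal{P}}^{\alpha\beta}\Tilde z^{\alpha}_{t+1}$ gives $\mathbb{E}[p\log p]\approx\mu\log\mu+\tfrac{1}{2\mu}\text{Var}(p)$; the first-order term vanishes in expectation rather than being ``absorbed into $\mathbb{E}[\Tilde{\mathcal{P}}_{t}^{\alpha\beta}]$'', and the zeroth-order term $\mu\log\mu$ survives. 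Under your expansion each of the three $1/(k+1)$ fractions would therefore carry an extra factor $1/(2\mu^{\alpha\beta})$ and an additional $\mu\log\mu$ contribution would remain, neither of which appears in \eqref{expected_cost_average}, so your route does not land on the stated expression. The paper instead uses the cruder surrogate $\Tilde{\mathcal{P}}\log\Tilde{\mathcal{P}}\approx\Tilde{\mathcal{P}}^{2}-\Tilde{\mathcal{P}}$ (i.e.\ $\log p\approx p-1$), whence $\mathbb{E}[\Tilde{\mathcal{P}}\log\Tilde{\mathcal{P}}]\approx\text{Var}(\Tilde{\mathcal{P}})+\mathbb{E}[\Tilde{\mathcal{P}}]^{2}-\mathbb{E}[\Tilde{\mathcal{P}}]$ as in \eqref{cost_dp_avg_temp2}; it is this unit-coefficient $\text{Var}(\Tilde{\mathcal{P}}_{t}^{\alpha\beta})$, evaluated by precisely the variance-of-a-ratio computation you describe, that produces the three fractions with denominators $\sum_{\alpha}(\overline{\mathcal{P}}^{\alpha\beta}\Tilde z^{\alpha}_{t+1})^{2}$, $\sum_{\alpha}(\overline{\mathcal{P}}^{\alpha\beta}\Tilde z^{\alpha}_{t+1})^{3}$ and $\sum_{\alpha}(\overline{\mathcal{P}}^{\alpha\beta}\Tilde z^{\alpha}_{t+1})^{4}$ in \eqref{expected_cost_average}. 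So keep your setup and your covariance bookkeeping, but replace the expansion of $p\log p$ about the mean with the paper's $p^{2}-p$ approximation (noting also that the displayed corollary keeps only the variance part of \eqref{cost_dp_avg_temp2} together with the linear term, so reproducing it verbatim requires following that specific approximation rather than a generic second-order delta-method argument).
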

\begin{proof} 
See proof in Appendix \ref{appendix_proof_average_cost}.
\end{proof}

\section{Case Study} \label{Sec:Case Study}
The case study uses electricity power consumption data and available control means (e.g. HVAC, lightning, elevators) from one campus building of New York University (NYU)  located in Manhattan, NY, \cite{Ali_Hierarchical}. This data spans across 2018 and includes the participation in several DR events of the commercial system relief program (CSRP) operated by Consolidated Edison, the local electric power utility. Using this data, we generate an ensemble of 100 synthetic buildings, where energy consumption data is randomized by adding a random Gaussian noise to the original data such that the resulting synthetic consumption remains within $\pm 10\%$ of the original values. The aggregated power consumption data for this ensemble of 100 buildings is displayed in Fig.~\ref{fig:power_agg_plus_MP} (a) and is used to construct the MP with 20 Markovian states for the summer season (from mid-June to late-September) when all DR events in this region occurred in 2018. The resulting default transition probabilities $(\overline{\mathcal{P}}^{\alpha\beta}$) are shown in Fig.~\ref{fig:power_agg_plus_MP} (b). For simulating a realistic DR event, we isolate the period from 11:00 AM to 3:00 PM on July 2\textsuperscript{nd}, 2018 when such an event happened in Manhattan, NY \cite{DR_Events_Coned} and NYU buildings were called upon to provide DR. We also price electricity consumption during the DR event using real-world tariff from Consolidated Edison, \cite{DR_Events_Coned}, and set the discomfort penalty paid to individual DR participants  $\gamma=\$15$.

\begin{figure}[!t]
\centering 
\includegraphics[width=\columnwidth,trim={0 3.5cm 0 0}]{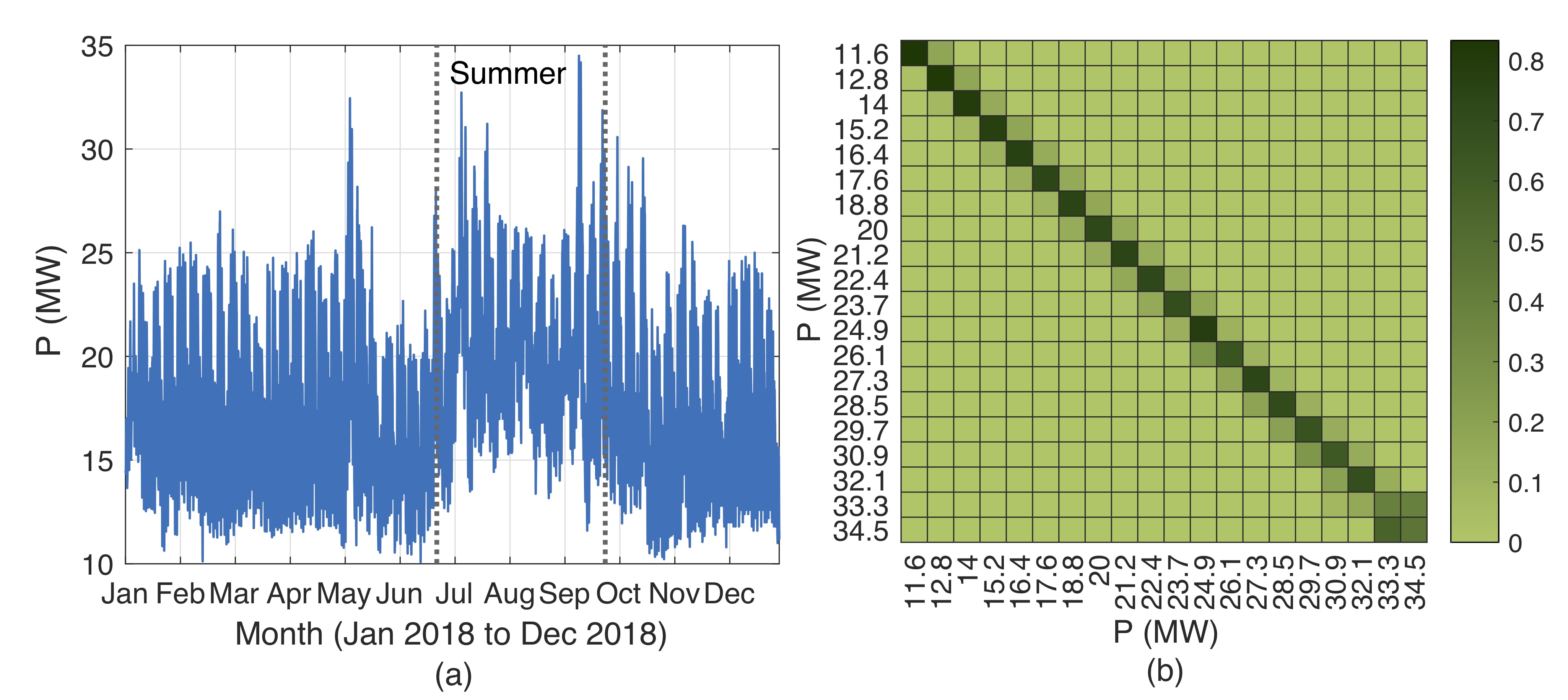}
\caption{(a) Aggregated power consumption of 100 buildings, (b) default transition probability matrix with 20 states constructed from the power profile (Summer) in (a), where color density indicates the probability value.}
\label{fig:power_agg_plus_MP}
\end{figure}

\textit{1) Effect of privatization:} We first illustrate the effect of privatization with the Dirichlet distribution on the the default transition probabilities ($\overline{\mathcal{P}}^{\alpha\beta}$) given in Fig.~\ref{fig:power_agg_plus_MP} (b). To this end, we take row vector 18 from the default transition probability matrix in Fig.~\ref{fig:power_agg_plus_MP} (b), which corresponds to the state with power consumption of 32.1 MW, as vector $\overline{\zeta}^{\beta}$ and obtain its adjacent vector $\overline{\eta}^{\beta}$ by using Definition~\ref{def:adjacent_vectors} and setting $h=0.03$. \textcolor{black}{This means we subtract 0.015 from one of the entries (entry 18 in our case) in vector $\overline{\zeta}^{\beta}$ and add 0.015 to one of the remaining entries (entry 19 in our case).} We set $k=50$ and $k=200$ to perform 1000 independent executions of the Dirichlet mechanism for non-zero elements of both $\overline{\zeta}^{\beta}$ and $\overline{\eta}^{\beta}$ and plot the output private default transition probabilities in Fig.~\ref{fig:privacy_guarantee}.  We note that, if $k=50$ and the maximum probability of
privacy failure is $\delta=0.05$, the mechanism is (0.88, 0.05)-differentially private as per Theorem~\ref{theorem_dm_privacy}, and it is impossible to distinguished between the outcomes, especially with a high probability, of  $\overline{\zeta}^{\beta}$ or $\overline{\eta}^{\beta}$. However, if $\delta=0.05$ for $k=200$, we found the mechanism is (5.45, 0.05)-differentially private, i.e. less private then in the case with $k=50$. In this case, the outcomes of the mechanism for $\overline{\zeta}^{\beta}$ and $\overline{\eta}^{\beta}$ are relatively distinguishable (see  Fig.~\ref{fig:privacy_guarantee}). Thus, we can achieve stronger privacy guarantees for lower values of $k$.

\begin{figure}[!t]
\centering 
\includegraphics[width=0.95\columnwidth,trim={0 2.5cm 0 0}]{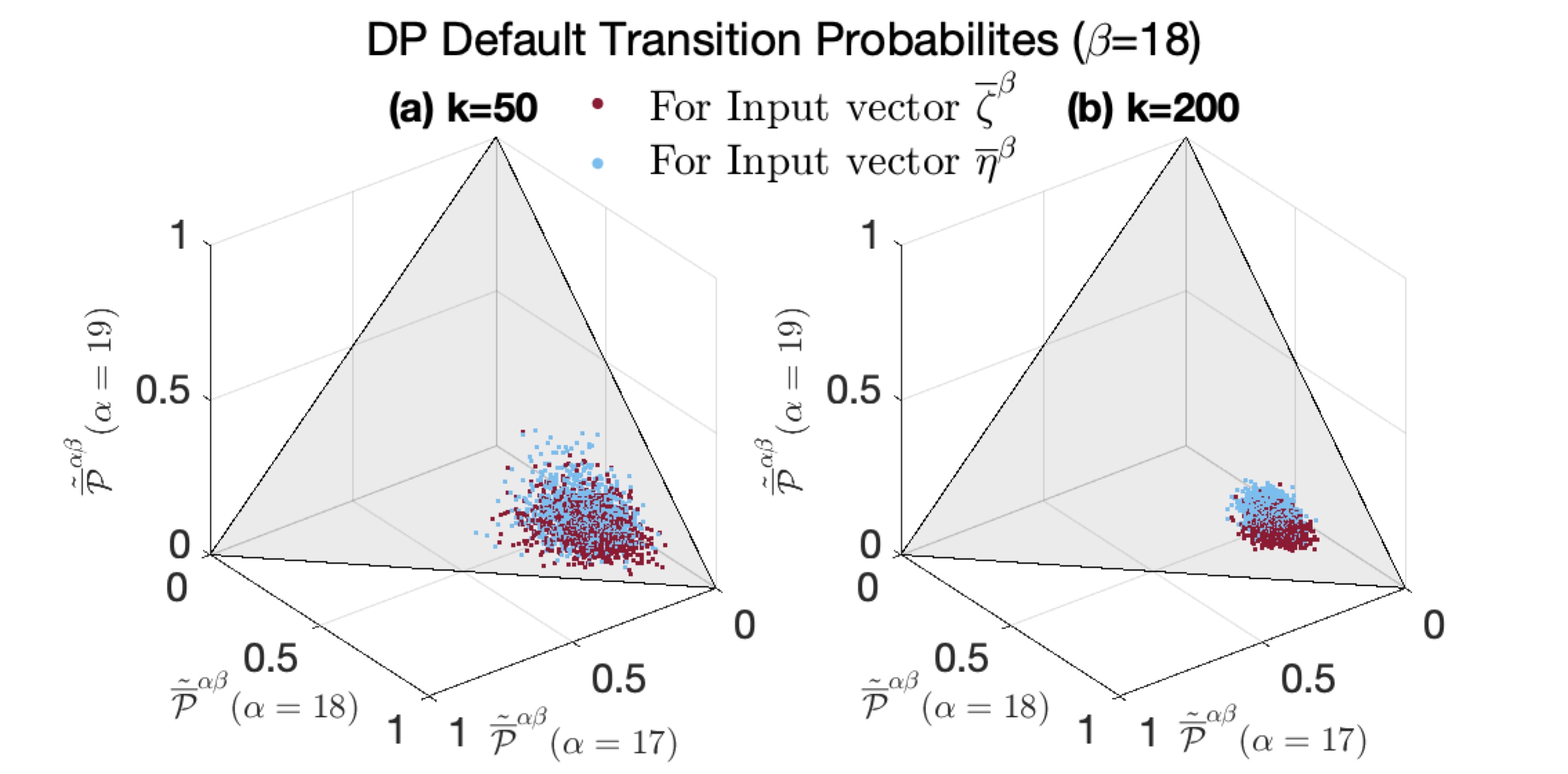}
\caption{Effects of privatization with the  Dirichlet distribution: (a) $k=50$, (0.88, 0.05)-differentially private,  (b) $k=200$, (5.45, 0.05)-differentially private. The shaded triangle defines the output space of the Dirichlet mechanism exhibiting integrality of a unit simplex.  Red and blue dots represent  independent runs for adjacent vectors $\overline{\zeta}^{\beta}$ and $\overline{\eta}^{\beta}$.}
\label{fig:privacy_guarantee}
\end{figure}

\textit{2) Effect on the optimal control policy:} Next, we implement the stochastic approach to obtain analytical differentially private optimal control policies. Fig.~\ref{fig:optimal_policy_stochastic} compares the resulting policies for the two adjacent input vectors $\overline{\zeta}^{\beta}$ and $\overline{\eta}^{\beta}$ with $h=0.03$. We plot the optimal control policy for the state $\beta=18$ at time 12:45 PM and show optimal transitions from $\beta$ to all other non-zero states $\alpha$. We note that the two adjacent inputs provide  similar control policies (as outputs) for both the Taylor approximation and Digamma equivalent methods. In fact, the $L_1$-norm distance between the two output vectors in Fig.~\ref{fig:optimal_policy_stochastic} is 0.0493 when $k=50$ and 0.0482 when $k=200$ for the Taylor approximation, and 0.0494 when $k=50$ and 0.0482 when $k=200$ for the Digamma equivalent. Similarly, based on private default transition probabilities $({\Tilde{\overline{\mathcal{P}}}}^{\alpha\beta})$ in Fig.~\ref{fig:privacy_guarantee}, we implement the average value approach to obtain private optimal control policies. The results of 1000 independent runs of the LS-MDP method for 1000 independent samples of DP default transition probabilities are shown in Fig.~\ref{fig:optimal_policy_average} (a) for $K=50$ and Fig.~\ref{fig:optimal_policy_average}  (b) for $k=200$, where red dots correspond to input vector $\overline{\eta}^{\beta}$ and blue dots to $\overline{\zeta}^{\beta}$. We computed the expected DP optimal policy for the average value approach and illustrated it in Fig.~\ref{fig:optimal_policy_average} (c) for $k=50$ and Fig.~\ref{fig:optimal_policy_average} (d) for $k=200$. The $L_1$-norm distance between the two outputs for $k=50$ is  0.0479 and for $k=200$ is  0.0476.

\begin{figure}[!t]
\centering 
\includegraphics[width=0.95\columnwidth,trim={0 3cm 0 0}]{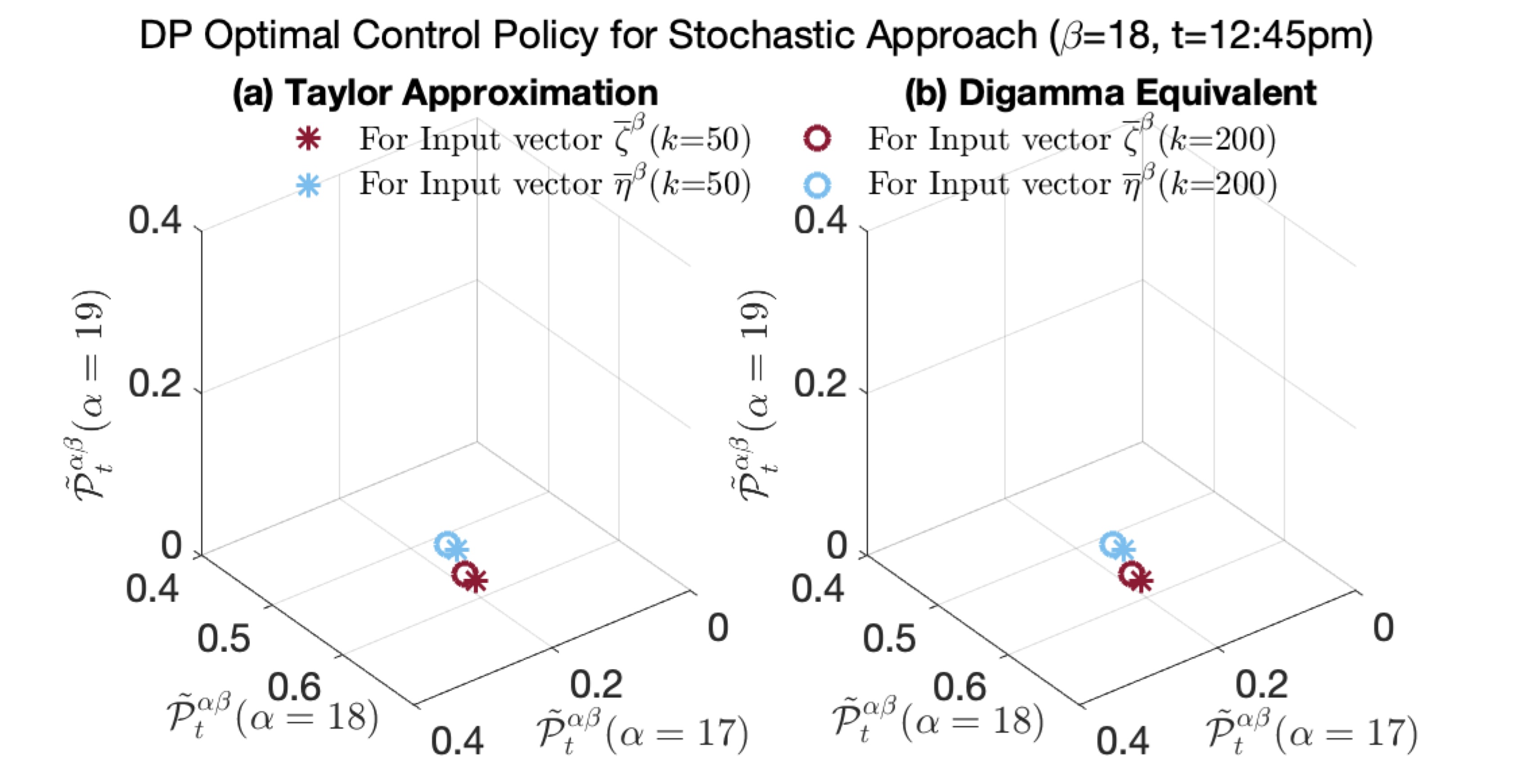}
\caption{Comparison of the DP  control policies for  adjacent  vectors ($h=0.03$) for (a) Taylor  and (b) Digamma  methods from   state $\beta=18$ to states 17, 18 and 19 at  12:45 pm.}
\label{fig:optimal_policy_stochastic}
\end{figure}

\begin{figure}[!t]
\centering 
\includegraphics[width=\columnwidth,trim={0 5cm 0 0}]{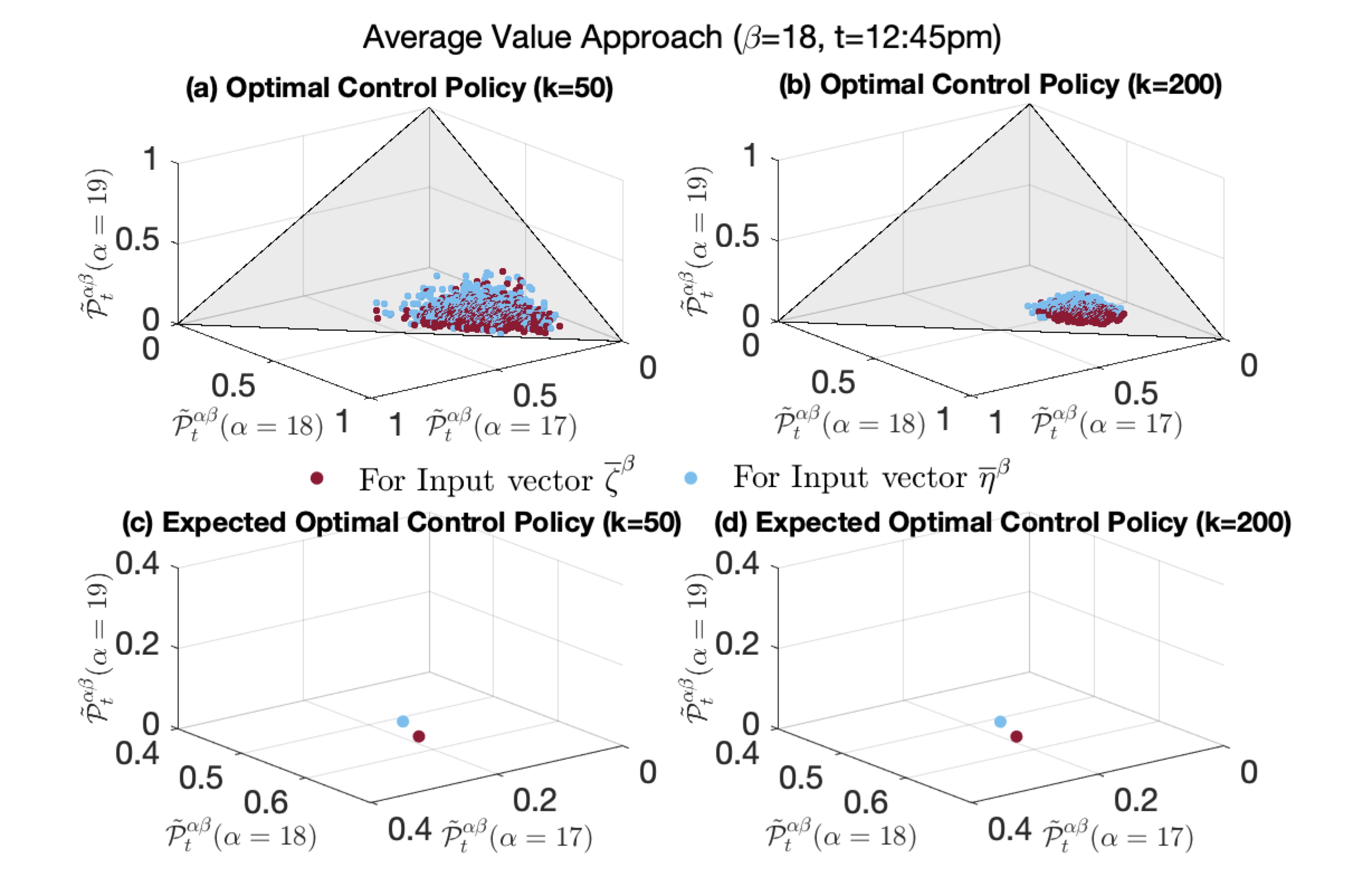}
\caption{Comparison of the DP optimal control policies for adjacent  vectors ($h=0.03$) for the average value method. The plots are the optimal transition probabilities from state $\beta=18$ to states 17, 18 and 19 at  12:45 pm. Each red and blue dot in (a) and (b) represents an independent run for inputs $\overline{\eta}^{\beta}$ and $\overline{\zeta}^{\beta}$ for $k=50$ and $k=200$, respectively, and the shaded triangle defines the output space of the Dirichlet mechanism exhibiting integrality of a unit simplex. The red and blue dots in (c) and (d) represent the expected optimal policy for both inputs $\overline{\eta}^{\beta}$ and $\overline{\zeta}^{\beta}$ for $k=50$ and $k=200$, respectively.}
\label{fig:optimal_policy_average}
\end{figure}

\textit{3) Effect of privacy on a DR event:} Using both the stochastic and average value approaches, we obtain  the optimal control policies for one DR event (11:00 AM to 3:00 PM on July 2\textsuperscript{nd}, 2018) in which the  ensemble of 100 NYU buildings  participated, and present our results in Fig.~\ref{fig:power_event}. Note that in each case we implement the DR protocol 30 minutes before the DR event to mirror the real-life conditions and without perfect foresight of the future. First, observe that all MDP-based policies extract more capacity from the ensemble than the ad-hoc current practice. Second, the DR curtailments are slightly decreased for privacy-aware solutions as expected due to the introduction of the Dirichlet noise. We also observe that both the Digamma and Taylor approaches lead to similar DR capacity extracted during the DR event with the least deviation from the non-privatized response obtained with the standard MDP optimization. On the other hand, the average value approach leads to the extraction of $\sim15\%$ less capacity at the peak of the DR event than the analytical policies obtained with the Digamma and Taylor approaches.

This performance difference between the stochastic and average value approaches is also observed in their privacy guarantees and costs, where generally more private policies yield less DR capacity extracted from the ensemble and greater privacy costs. We present the summary of privacy guarantees and privacy costs in Fig.~\ref{fig:privacy_cost_comp} for different values of  parameter $k$. Thus, among the analytical policies, the Digamma  method is slightly more expensive than the Taylor approximation, but provides more accurate solutions. On the other hand, the average value method is more expensive as compared to both the Taylor and Digamma methods. In general, privacy decreases for all methods with a decrease in the value of $k$, indicating a trade-off between solution privacy and cost as a function of the added Dirichlet noise.

\begin{figure}[!t]
\centering 
\includegraphics[width=0.9\columnwidth]{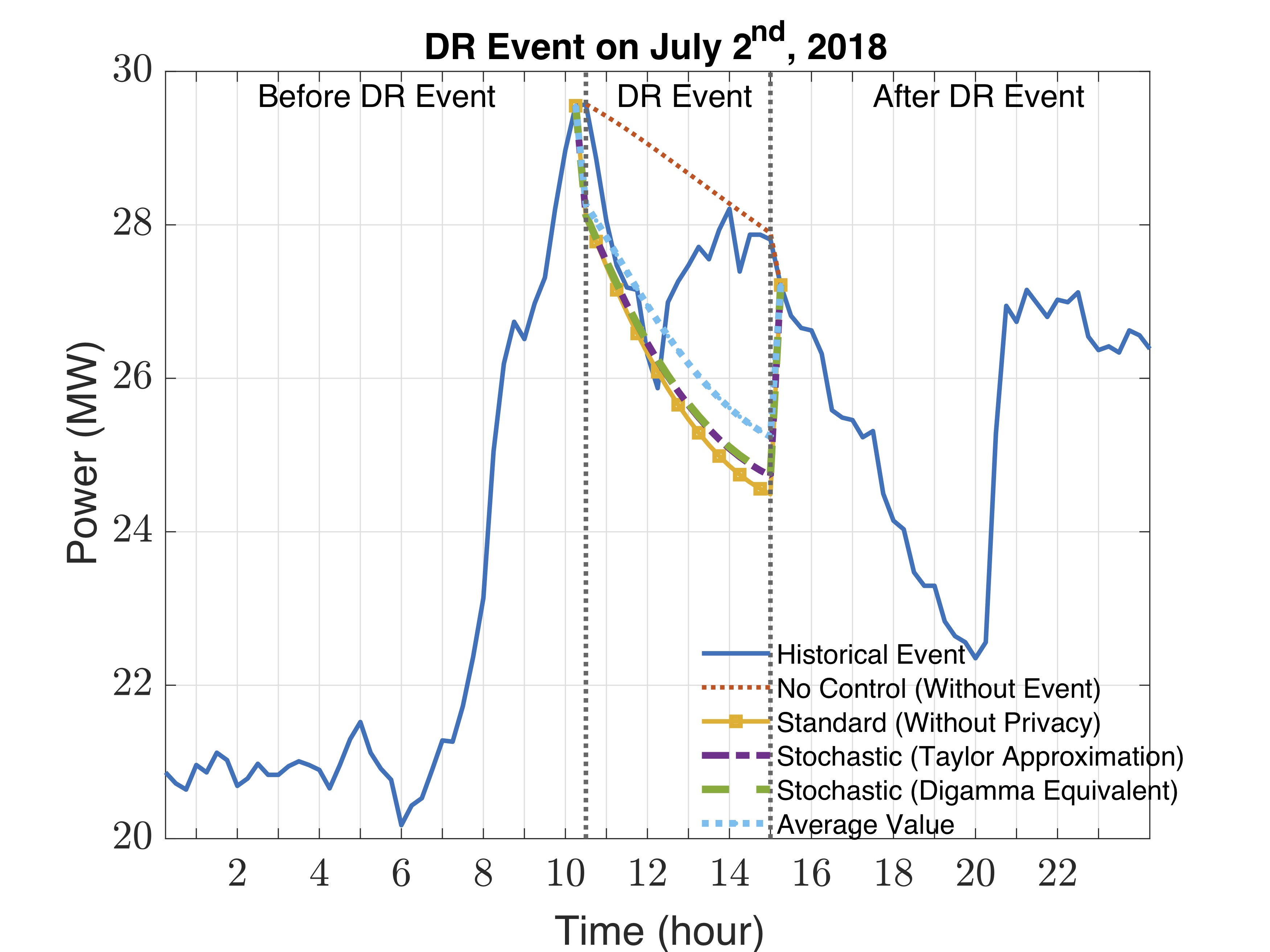}
\caption{Performance of the standard and private MDP methods for the DR event from 11:00 AM to 3:00 PM on July 2\textsuperscript{nd}, 2018.}
\label{fig:power_event}
\end{figure}

\begin{figure}[!t]
\centering 
\includegraphics[width=0.85\columnwidth]{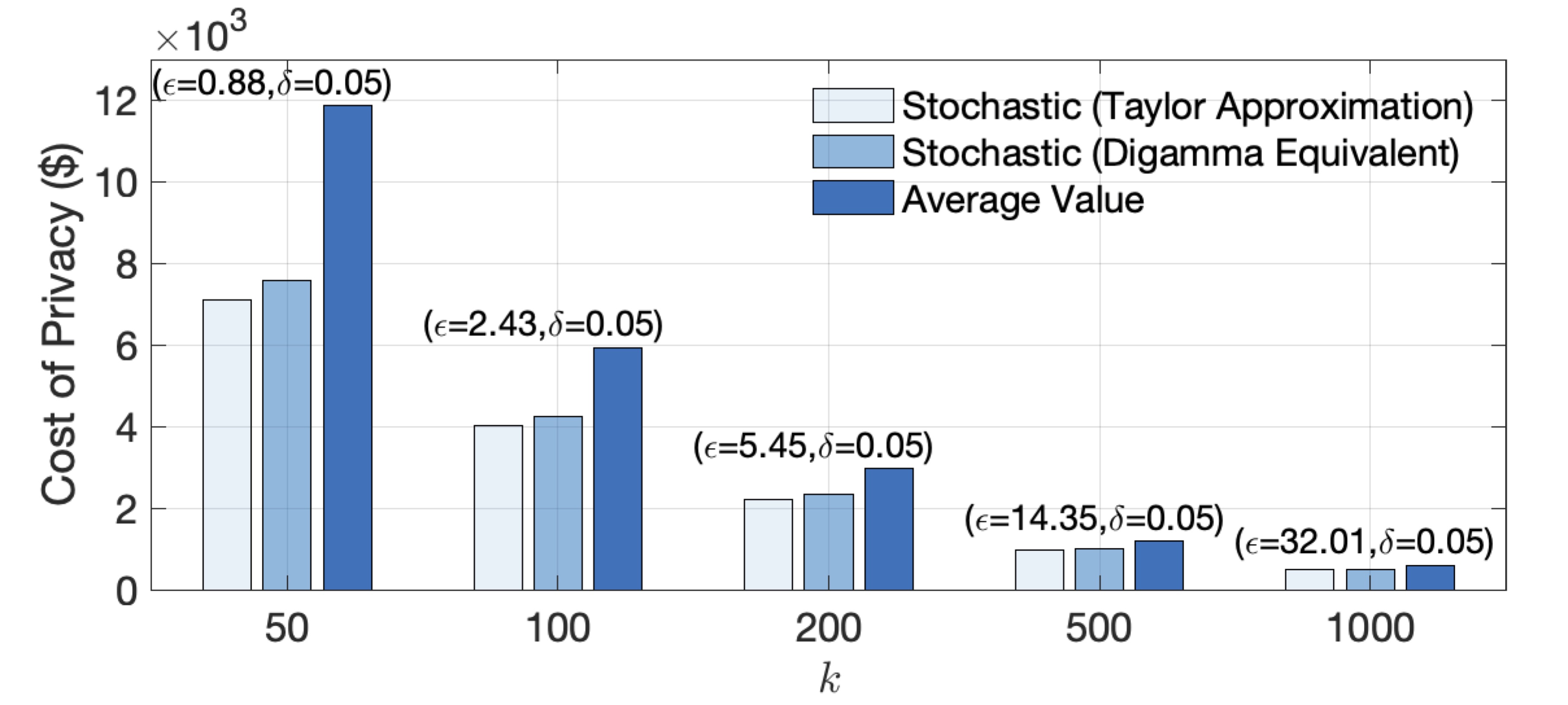}
\caption{Comparison of the privacy costs and  guarantees for the stochastic and average value methods for different values of  $k$, where $\delta = 0.05$ remains constant and $\epsilon$ changes as a function of $k$.}
\label{fig:privacy_cost_comp}
\end{figure}

\section{Conclusion}
This paper extended existing MDP formulations for dispatching ensembles of electrical loads to incorporate the notion of differential privacy that safeguards energy consumption data of DR participants. This is achieved by leveraging the Dirichlet mechanism, which preserves unit simplex (i.e., the sum of transition probabilities for a given state is equal to one), and its coupling with the two variances of the stochastic MDP formulations. The first variant -- stochastic approach --  privatizes the optimization routine of the aggregator, but has greater data requirements. The second variant -- average value approach --  reduces data constraints, but ensures differential privacy using its post-processing property.  The case study is based on real-world demand response data and demonstrates that both the stochastic and average value approaches provide privacy-cognizant solutions at the expense of a slight increase in operating costs,  reflected in the cost of privacy, but provide robust privacy guarantees and allow for trading-off between solution quality and privacy risks.

\bibliographystyle{IEEEtran}
\bibliography{ref.bib}


\appendices

\section{Proofs of Theorems \ref{theorem_0} and \ref{theorem_a}} \label{appendix_proof_theorems}
We follow the same procedure to prove Theorems \ref{theorem_0} and \ref{theorem_a} and denote  theorem-specific terms as $\mathcal{Z}^{\alpha\beta}$, where $\mathcal{Z}^{\alpha\beta}$ for each theorem is derived at the end of this appendix. We write the following Bellman equation $\forall t$ and $\forall \beta$:
\begin{align}
\begin{split}
&\hspace{-3mm} \frac{\varphi^{\beta}_{t}}{\gamma} \!=\! \frac{1}{\gamma} \underset{\substack{\mathcal{P}}}{\text{min}} \big(\!\!-\!U_{t}^{\beta}\!\! + \!\mathbb{E}_{\mathcal{P}^{\alpha\beta}_{t}} \! \big[\gamma\! \log \mathcal{P}_{t}^{\alpha\beta} \!-\!\gamma\!\log\mathcal{Z}^{\alpha\beta}\! + \varphi^{\alpha}_{t+1}\big]\! \big), \label{bellmen_1} 
\end{split} 
\end{align}
where $\varphi^{\beta}_{t}$ is the value function of the load ensemble at present state $\beta$ at time $t$ and $\varphi^{\alpha}_{t+1}$ is the value function at next state $\alpha$ at time $t+1$. By introducing desirability function $z^{\beta}_{t} = \text{exp}(\frac{-\varphi^{\beta}_{t}}{\gamma})$ in \eqref{bellmen_1} we obtain:
\begin{align}
\begin{split}
& -\!\!\text{log}z^{\beta}_{t}\! =\! 
\frac{1}{\gamma} \underset{\substack{\mathcal{P}}}{\text{min}} \bigg(\!\!-U_{t}^{\beta} \!+ \!\gamma \mathbb{E}_{\mathcal{P}^{\alpha\beta}_{t}}\! \bigg[ \!\log \frac{\mathcal{P}_{t}^{\alpha\beta}} {\mathcal{Z}^{\alpha\beta} z^{\alpha}_{t+1}} \bigg] \!\bigg). \label{bellmen_5} 
\end{split}
\end{align}
After normalizing as  $\mathcal{G}^{\beta}_{t}(z)=\sum\limits_{\alpha\in\mathcal{A}}\mathcal{Z}^{\alpha\beta}z^{\alpha}_{t+1}$, \eqref{bellmen_5} is recast as:
\begin{align}
\begin{split}
&\!\! \!-\!\text{log}(\!z^{\beta}_{t}\!)\! = \!
\frac{-U_{t}^{\beta}}{\gamma}\! + \underset{\substack{\mathcal{P}}}{\text{min}} 
KL\! \bigg[\!\mathcal{P}_{t}^{\alpha\beta} \bigg\Vert \frac{\mathcal{Z}^{\alpha\beta}z^{\alpha}_{t+1}}{\mathcal{G}^{\beta}_{t}(z)} \!\bigg] \!-\! \text{log}\mathcal{G}^{\beta}_{t}(\!z\!)\! \label{bellmen_9} 
\end{split}
\end{align}
By equating the two distributions in the KL divergence, i.e. setting $KL\big[\cdot|\cdot\big]=0$ , the optimal policy follows as:
\begin{align}
&\mathcal{P}_{t}^{\alpha \beta} = \frac{\mathcal{Z}^{\alpha\beta}z^{\alpha}_{t+1}}{\mathcal{G}^{\beta}_{t}(z)} = \frac{\mathcal{Z}^{\alpha\beta}z^{\alpha}_{t+1}}{\sum\limits_{\alpha\in\mathcal{A}}\mathcal{Z}^{\alpha\beta}z^{\alpha}_{t+1}}. \label{optimal_policy_1}
\end{align}
Then, this optimal policy converts \eqref{bellmen_9} to:
\begin{align}
\begin{split}
& -\text{log}(z^{\beta}_{t}) =
- \Big\{\frac{U_{t}^{\beta}}{\gamma} + \text{log} \Big[\sum_{\alpha}\mathcal{Z}^{\alpha\beta}z^{\alpha}_{t+1} \Big] \Big\}. \label{bellmen_blackuced_2} 
\end{split}
\end{align}
Exponentiating Eq.~\eqref{bellmen_blackuced_2} leads to the following  form:
\begin{align}
&z^{\beta}_{t} = \text{exp}\Big(\frac{U_{t}^{\beta}}{\gamma}\Big) \sum_{\alpha}\mathcal{Z}^{\alpha\beta}z^{\alpha}_{t+1}. \label{bellmen_blackuced_3}
\end{align}
Since the value of $\mathcal{Z}^{\alpha\beta}$ varies for Theorems \ref{theorem_0} and \ref{theorem_a}, we derive theorem-specific results for each case below:

\noindent \underline{For Theorem \ref{theorem_0}:} We have $\mathcal{Z}^{\alpha\beta} = \overline{\mathcal{P}}^{\alpha\beta}$, and using this value of $\mathcal{Z}^{\alpha\beta}$ in \eqref{optimal_policy_1} concludes the proof for Theorem \ref{theorem_0}.

\noindent \underline{For Theorem \ref{theorem_a}:} We have $\mathcal{Z}^{\alpha\beta} = \mathbb{E}_{\Tilde{\overline{\mathcal{P}}}^{\alpha\beta}} [\log {\Tilde{\overline{\mathcal{P}}}^{\alpha\beta}}]$, and using this value of $\mathcal{Z}^{\alpha\beta}$ in \eqref{optimal_policy_1} concludes the proof for Theorem \ref{theorem_a}.

\vspace{10mm}

\section{Proof of Corollaries \ref{cost_stochastic_taylor}, \ref{cost_stochastic_digamma}} \label{appendix_cost_stochastic}
Let $\varphi^{\beta}_{t}$ represents the value function for the non-private LS-MDP and $\Tilde{\varphi}^{\beta}_{t}$ for the DP LS-MDP. We start with the non-private LS-MDP and $\forall \beta\in\mathcal{A},t\in\mathcal{T}$:
\begin{align}
\begin{split}
& \varphi^{\beta}_{t} = -U_{t}^{\beta} +\!\! \sum_{\alpha \in \mathcal{A}}\!\mathcal{P}_{t}^{\alpha\beta}(\gamma \log \mathcal{P}_{t}^{\alpha\beta} - \gamma\log {\overline{\mathcal{P}}^{\alpha\beta}} + \varphi^{\alpha}_{t+1}) 
\end{split} \label{Val_func_det_cost}
\end{align}
By using \eqref{optimal_policy} from Theorem \ref{theorem_0} in \eqref{Val_func_det_cost}, we obtain:

\begin{align}
\begin{split}
& \varphi^{\beta}_{t} = -U_{t}^{\beta} - \gamma \log \sum_{\alpha \in \mathcal{A}}\overline{\mathcal{P}}^{\alpha \beta}z_{t+1}^{\alpha} 
\end{split}
\end{align}
Similarly, for the DP LS-MDP and $\forall \beta\in\mathcal{A},t\in\mathcal{T}$:
\begin{align}
\begin{split}
& \Tilde{\varphi}^{\beta}_{t} = -U_{t}^{\beta} + \!\!\sum_{\alpha \in \mathcal{A}}\!\Tilde{\mathcal{P}}_{t}^{\alpha\beta}( \gamma \log \Tilde{\mathcal{P}}_{t}^{\alpha\beta} - \gamma \log {\overline{\mathcal{P}}^{\alpha\beta}} + \Tilde{\varphi}^{\alpha}_{t+1}) 
\end{split} \label{Val_func_dp_cost}
\end{align}
By using \eqref{optimal_policy_dp} from Theorem \ref{theorem_a} in \eqref{Val_func_dp_cost}, we obtain:
\begin{align}
\begin{split}
&\!\!\Tilde{\varphi}^{\beta}_{t}\! = -U_{t}^{\beta} \!+\! \gamma\!\! \sum_{\alpha \in \mathcal{A}}\!\! \Tilde{\mathcal{P}}_{t}^{{\alpha \beta}} \mathbb{E}_{\Tilde{\overline{\mathcal{P}}}^{\alpha\beta}}\! [\log\! {\Tilde{\overline{\mathcal{P}}}^{\alpha\beta}}\!]\! - \gamma\!\! \sum_{\alpha \in \mathcal{A}}\!\! \Tilde{\mathcal{P}}_{t}^{{\alpha \beta}} \! \log \!{\overline{\mathcal{P}}^{\alpha\beta}} \\& - \gamma \log \sum_{\alpha\in\mathcal{A}}\exp(\mathbb{E}_{\Tilde{\overline{\mathcal{P}}}^{\alpha\beta}} [\log {\Tilde{\overline{\mathcal{P}}}^{\alpha\beta}}])\Tilde{z}^{\alpha}_{t+1}
\end{split}
\end{align}
We define the cost of privacy as the difference between the value functions of the non-private and DP LS-MDPs:
\begin{align}
& \Delta C  = \Delta \varphi_{t}^{\beta} = \Tilde{\varphi_{t}}^{\beta} - \varphi_{t}^{\beta} \nonumber \\
\begin{split}
& = \gamma\!\! \sum_{\alpha \in \mathcal{A}}\! \Tilde{\mathcal{P}}_{t}^{{\alpha \beta}} \mathbb{E}_{\Tilde{\overline{\mathcal{P}}}^{\alpha\beta}} [\log {\Tilde{\overline{\mathcal{P}}}^{\alpha\beta}}] - \gamma\!\! \sum_{\alpha \in \mathcal{A}}\! \Tilde{\mathcal{P}}_{t}^{\alpha \beta} \log {\overline{\mathcal{P}}^{\alpha\beta}} - \\& \gamma \log \!\!\sum_{\alpha\in\mathcal{A}}\!\exp(\mathbb{E}_{\Tilde{\overline{\mathcal{P}}}^{\alpha\beta}} [\log {\Tilde{\overline{\mathcal{P}}}^{\alpha\beta}}])\Tilde{z}^{\alpha}_{t+1} + \gamma \log\!\! \sum_{\alpha \in \mathcal{A}}\!\overline{\mathcal{P}}^{\alpha \beta}z_{t+1}^{\alpha}
\end{split} \label{cost_of_privacy_gen}
\end{align}

\noindent \underline{For Corollary \ref{cost_stochastic_taylor}:} Using \eqref{optimal_policy_dp_Taylor} in \eqref{cost_of_privacy_gen}, we conclude the proof. 

\noindent \underline{For Corollary \ref{cost_stochastic_digamma}:} Using \eqref{optimal_policy_dp_Digamma} in \eqref{cost_of_privacy_gen}, we conclude the proof. 

\vspace{10mm}
\section{Proof of Proposition \ref{average_expected_policy}} \label{appendix_proof_expected_policy}
Recall the multi-variate Taylor expansion:
\begin{align}
\begin{split}
& \mathbb{E}\bigg[\frac{X}{Y} \bigg] \approx \frac{\mathbb{E}[X]}{\mathbb{E}[Y]} - \frac{\text{Cov}[X,Y]}{\mathbb{E}[Y]^2} + \frac{\mathbb{E}[X]\text{Var}[Y]}{\mathbb{E}[Y]^3} \label{avg_optimal_policy_2}
\end{split}
\end{align}
where $\text{Cov}[X,Y]$ represents the covariance between X and Y. Using \eqref{avg_optimal_policy_2}, we proceed as:
\begin{align}
\begin{split}
& \mathbb{E}_{\Tilde{\overline{\mathcal{P}}}^{\alpha\beta}}\Bigg[ \frac{\Tilde{\overline{\mathcal{P}}}^{\alpha\beta}\Tilde{z}^{\alpha}_{t+1}}{\sum\limits_{\alpha\in\mathcal{A}}\!\Tilde{\overline{\mathcal{P}}}^{\alpha\beta}\Tilde{z}^{\alpha}_{t+1}}\Bigg] \approx \frac{\mathbb{E}_{\Tilde{\overline{\mathcal{P}}}^{\alpha\beta}}[\Tilde{\overline{\mathcal{P}}}^{\alpha\beta}]\Tilde{z}^{\alpha}_{t+1}}{\sum\limits_{\alpha\in\mathcal{A}}\!\mathbb{E}_{\Tilde{\overline{\mathcal{P}}}^{\alpha\beta}}[\Tilde{\overline{\mathcal{P}}}^{\alpha\beta}]\Tilde{z}^{\alpha}_{t+1}} \\& - \frac{\text{Cov}[\Tilde{\overline{\mathcal{P}}}^{\alpha\beta}\Tilde{z}^{\alpha}_{t+1},\sum_{\alpha\in\mathcal{A}}\Tilde{\overline{\mathcal{P}}}^{\alpha\beta}\Tilde{z}^{\alpha}_{t+1}]}{\sum\limits_{\alpha\in\mathcal{A}}\!\mathbb{E}_{\Tilde{\overline{\mathcal{P}}}^{\alpha\beta}}[\Tilde{\overline{\mathcal{P}}}^{\alpha\beta}]^2(\Tilde{z}^{\alpha}_{t+1})^2} \\& + \frac{\mathbb{E}_{\Tilde{\overline{\mathcal{P}}}^{\alpha\beta}}[\Tilde{\overline{\mathcal{P}}}^{\alpha\beta}]\Tilde{z}^{\alpha}_{t+1}\text{Var}[\sum_{\alpha\in\mathcal{A}}\Tilde{\overline{\mathcal{P}}}^{\alpha\beta}\Tilde{z}^{\alpha}_{t+1}]}{\sum\limits_{\alpha\in\mathcal{A}}\!\mathbb{E}_{\Tilde{\overline{\mathcal{P}}}^{\alpha\beta}}[\Tilde{\overline{\mathcal{P}}}^{\alpha\beta}]^3(\Tilde{z}^{\alpha}_{t+1})^3}  = \frac{\overline{\mathcal{P}}^{\alpha\beta}\Tilde{z}^{\alpha}_{t+1}}{\sum\limits_{\alpha\in\mathcal{A}}\!\overline{\mathcal{P}}^{\alpha\beta}\Tilde{z}^{\alpha}_{t+1}}  \\
& -\frac{(\Tilde{z}^{\alpha}_{t+1})^2\text{Cov}[\Tilde{\overline{\mathcal{P}}}^{\alpha\beta}\!,\Tilde{\overline{\mathcal{P}}}^{\alpha\beta}\!]\! +\!\!\!\! \sum\limits_{\nu\neq\alpha\in\mathcal{A}}\!\!\!\!\!\Tilde{z}^{\alpha}_{t+1}\Tilde{z}^{\nu}_{t+1}\text{Cov}[\Tilde{\overline{\mathcal{P}}}^{\alpha\beta}\!,\Tilde{\overline{\mathcal{P}}}^{\nu\beta}]}{\sum\limits_{\alpha\in\mathcal{A}}(\overline{\mathcal{P}}^{\alpha\beta}\Tilde{z}^{\alpha}_{t+1})^2} \\& + \frac{\splitfrac{\overline{\mathcal{P}}^{\alpha\beta}\Tilde{z}^{\alpha}_{t+1}\Big\{\sum\limits_{\alpha\in\mathcal{A}}(\Tilde{z}^{\alpha}_{t+1})^2\text{Var}[\Tilde{\overline{\mathcal{P}}}^{\alpha\beta}]+\sum\limits_{\alpha\in\mathcal{A}}\sum\limits_{\nu
\neq\alpha\in\mathcal{A}}}{\Tilde{z}^{\alpha}_{t+1}\Tilde{z}^{\nu}_{t+1}\text{Cov}[\Tilde{\overline{\mathcal{P}}}^{\alpha\beta},\Tilde{\overline{\mathcal{P}}}^{\nu\beta}]\Big\}}}{\sum\limits_{\alpha\in\mathcal{A}}(\overline{\mathcal{P}}^{\alpha\beta}\Tilde{z}^{\alpha}_{t+1})^3},
\end{split} \label{avg_optimal_policy_temp}
\end{align}
where:
\begin{align}
& \text{Cov}[\Tilde{\overline{\mathcal{P}}}^{\alpha\beta},\Tilde{\overline{\mathcal{P}}}^{\alpha\beta}] = \text{Var}[\Tilde{\overline{\mathcal{P}}}^{\alpha\beta}] = \frac{\overline{\mathcal{P}}^{\alpha\beta}(1-\overline{\mathcal{P}}^{\alpha\beta})}{k+1} \label{avg_cost_cov_same_1}
\end{align}
\begin{align}
\begin{split}
& \text{Cov}[\Tilde{\overline{\mathcal{P}}}^{\alpha\beta},\Tilde{\overline{\mathcal{P}}}^{\nu\beta}] = \frac{-(k\overline{\mathcal{P}}^{\alpha\beta})(k\overline{\mathcal{P}}^{\nu\beta})}{(\sum\limits_{\alpha\in\mathcal{A}}k\overline{\mathcal{P}}^{\alpha \beta})^{2}(\sum\limits_{\alpha\in\mathcal{A}}k\overline{\mathcal{P}}^{\alpha \beta}+1)} \\& = \frac{-k^2\overline{\mathcal{P}}^{\alpha\beta}\overline{\mathcal{P}}^{\nu\beta}}{(k\sum\limits_{\alpha\in\mathcal{A}}\overline{\mathcal{P}}^{\alpha \beta})^{2}(k\sum\limits_{\alpha\in\mathcal{A}}\overline{\mathcal{P}}^{\alpha \beta}+1)} = \frac{-\overline{\mathcal{P}}^{\alpha\beta}\overline{\mathcal{P}}^{\nu\beta}}{k+1}
\end{split} \label{avg_cost_cov_diff_1}
\end{align}
By using \eqref{avg_cost_cov_same_1} and \eqref{avg_cost_cov_diff_1} in \eqref{avg_optimal_policy_temp}, we conclude the proof.

\vspace{10mm}

\section{Proof of Corollary \ref{average_expected_cost}} \label{appendix_proof_average_cost}
We start with the non-private LS-MDP and $\forall \beta\in\mathcal{A},t\in\mathcal{T}$:
\begin{align}
\begin{split}
& \varphi^{\beta}_{t} = -U_{t}^{\beta} - \gamma \log \sum_{\alpha \in \mathcal{A}}\overline{\mathcal{P}}^{\alpha \beta}z_{t+1}^{\alpha}
\end{split} \label{Val_func_det_cost_avg}
\end{align}
Similarly, for the DP LS-MDP and $\forall \beta\in\mathcal{A},t\in\mathcal{T}$:
\begin{align}
\begin{split}
& \Tilde{\varphi}^{\beta}_{t} = -U_{t}^{\beta} + \!\gamma\!\!\sum_{\alpha \in \mathcal{A}}\!\Tilde{\mathcal{P}}_{t}^{\alpha\beta}(  \log \Tilde{\mathcal{P}}_{t}^{\alpha\beta} -  \log {\overline{\mathcal{P}}^{\alpha\beta}} +\frac{\Tilde{\varphi}^{\alpha}_{t+1}}{\gamma}) 
\end{split} \label{Val_func_dp_cost_avg}
\end{align}
Then, the expected cost of privacy is derived as:
\begin{align}
\begin{split}
& \Delta C = \mathbb{E}_{\Tilde{\overline{\mathcal{P}}}^{\alpha\beta}}[\Delta \varphi_{t}^{\beta}] = \mathbb{E}_{\Tilde{\overline{\mathcal{P}}}^{\alpha\beta}}[\Tilde{\varphi_{t}}^{\beta}] - \varphi_{t}^{\beta} = \\& \gamma\!\!\sum_{\alpha \in \mathcal{A}}\!\mathbb{E}_{\Tilde{\overline{\mathcal{P}}}^{\alpha\beta}}[\Tilde{\mathcal{P}}_{t}^{\alpha\beta}\log \Tilde{\mathcal{P}}_{t}^{\alpha\beta}] - \gamma\!\!\sum_{\alpha \in \mathcal{A}}\!\mathbb{E}_{\Tilde{\overline{\mathcal{P}}}^{\alpha\beta}}[\Tilde{\mathcal{P}}_{t}^{\alpha\beta}]\log {\overline{\mathcal{P}}^{\alpha\beta}} \\& - \gamma\!\!\sum_{\alpha \in \mathcal{A}}\!\mathbb{E}_{\Tilde{\overline{\mathcal{P}}}^{\alpha\beta}}[\Tilde{\mathcal{P}}_{t}^{\alpha\beta}]\Tilde{z}_{t+1}^{\alpha} + \gamma \log \sum_{\alpha \in \mathcal{A}}\overline{\mathcal{P}}^{\alpha \beta}z_{t+1}^{\alpha},
\end{split} \label{cost_dp_avg_temp}
\end{align}
where $\mathbb{E}_{\Tilde{\overline{\mathcal{P}}}^{\alpha\beta}}[ \Tilde{\mathcal{P}}_{t}^{{\alpha \beta}}]$ is already computed and given by \eqref{expected_policy_average}, and we derive $\mathbb{E}_{\Tilde{\overline{\mathcal{P}}}^{\alpha\beta}}[\Tilde{\mathcal{P}}_{t}^{\alpha\beta}\log \Tilde{\mathcal{P}}_{t}^{\alpha\beta}]$ as follows:
\begin{align}
\begin{split}
& \mathbb{E}_{\Tilde{\overline{\mathcal{P}}}^{\alpha\beta}}[\Tilde{\mathcal{P}}_{t}^{\alpha\beta}\log \Tilde{\mathcal{P}}_{t}^{\alpha\beta}] \approx \mathbb{E}_{\Tilde{\overline{\mathcal{P}}}^{\alpha\beta}}[(\Tilde{\mathcal{P}}_{t}^{\alpha\beta})^2-\Tilde{\mathcal{P}}_{t}^{\alpha\beta}] \\&
= \mathbb{E}_{\Tilde{\overline{\mathcal{P}}}^{\alpha\beta}}[(\Tilde{\mathcal{P}}_{t}^{\alpha\beta})^2]-\mathbb{E}_{\Tilde{\overline{\mathcal{P}}}^{\alpha\beta}}[\Tilde{\mathcal{P}}_{t}^{\alpha\beta}] \\& = \text{Var}(\Tilde{\mathcal{P}}_{t}^{\alpha\beta})
+\mathbb{E}_{\Tilde{\overline{\mathcal{P}}}^{\alpha\beta}}[\Tilde{\mathcal{P}}_{t}^{\alpha\beta}]^2-\mathbb{E}_{\Tilde{\overline{\mathcal{P}}}^{\alpha\beta}}[\Tilde{\mathcal{P}}_{t}^{\alpha\beta}]
\end{split} \label{cost_dp_avg_temp2}
\end{align}
Now, we evaluate $\text{Var}(\Tilde{\mathcal{P}}_{t}^{\alpha\beta})$ in \eqref{cost_dp_avg_temp2} as follows:
\begin{align}
\begin{split}
& \text{Var}(\Tilde{\mathcal{P}}_{t}^{\alpha\beta}) = \text{Var}\Bigg(\!\frac{\Tilde{\overline{\mathcal{P}}}^{\alpha\beta}\Tilde{z}^{\alpha}_{t+1}}{\sum\limits_{\alpha\in\mathcal{A}}\textcolor{black}{\Tilde{\overline{\mathcal{P}}}^{\alpha\beta}}\Tilde{z}^{\alpha}_{t+1}} \!\Bigg)  \\& \approx
\frac{\text{Var}(\Tilde{\overline{\mathcal{P}}}^{\alpha\beta}\Tilde{z}^{\alpha}_{t+1})}{\sum\limits_{\alpha\in\mathcal{A}}\!\mathbb{E}_{\Tilde{\overline{\mathcal{P}}}^{\alpha\beta}}[\Tilde{\overline{\mathcal{P}}}^{\alpha\beta}]^2(\Tilde{z}^{\alpha}_{t+1})^2} \\& - \frac{2\mathbb{E}_{\Tilde{\overline{\mathcal{P}}}^{\alpha\beta}}[\Tilde{\overline{\mathcal{P}}}^{\alpha\beta}]\Tilde{z}^{\alpha}_{t+1}\text{Cov}[\Tilde{\overline{\mathcal{P}}}^{\alpha\beta}\Tilde{z}^{\alpha}_{t+1},\sum_{\alpha\in\mathcal{A}}\Tilde{\overline{\mathcal{P}}}^{\alpha\beta}\Tilde{z}^{\alpha}_{t+1}]}{\sum\limits_{\alpha\in\mathcal{A}}\!\mathbb{E}_{\Tilde{\overline{\mathcal{P}}}^{\alpha\beta}}[\Tilde{\overline{\mathcal{P}}}^{\alpha\beta}]^3(\Tilde{z}^{\alpha}_{t+1})^3} \\& + \frac{\mathbb{E}_{\Tilde{\overline{\mathcal{P}}}^{\alpha\beta}}[\Tilde{\overline{\mathcal{P}}}^{\alpha\beta}]^2(\Tilde{z}^{\alpha}_{t+1})^2\text{Var}[\sum_{\alpha\in\mathcal{A}}\Tilde{\overline{\mathcal{P}}}^{\alpha\beta}\Tilde{z}^{\alpha}_{t+1}]}{\sum\limits_{\alpha\in\mathcal{A}}\!\mathbb{E}_{\Tilde{\overline{\mathcal{P}}}^{\alpha\beta}}[\Tilde{\overline{\mathcal{P}}}^{\alpha\beta}]^4(\Tilde{z}^{\alpha}_{t+1})^4}
\end{split} \label{cost_dp_avg_temp3}
\end{align}
We have already computed terms  $\mathbb{E}_{\Tilde{\overline{\mathcal{P}}}^{\alpha\beta}}\big[\cdot \big],   \text{Var}\big[\cdot \big],$ and $\text{Cov}\big[\cdot \big]$ used in \eqref{cost_dp_avg_temp3} as part of Appendix \ref{appendix_proof_expected_policy}. By using respective terms in \eqref{cost_dp_avg_temp3}, inserting \eqref{cost_dp_avg_temp3} in \eqref{cost_dp_avg_temp2}, and then using \eqref{cost_dp_avg_temp2} in \eqref{cost_dp_avg_temp}, we conclude the proof.

\end{document}